\algnewcommand{\Inputs}[1]{%
  \State \textbf{Inputs:}
  \Statex \hspace*{\algorithmicindent}\parbox[t]{.8\linewidth}{\raggedright #1}
}
\algnewcommand{\Initialize}[1]{%
  \State \textbf{Initialization:}
  \Statex \hspace*{\algorithmicindent}\parbox[t]{.8\linewidth}{\raggedright #1}
}
\newtheorem{theorem}{Theorem}
\def\BibTeX{{\rm B\kern-.05em{\sc i\kern-.025em b}\kern-.08em
    T\kern-.1667em\lower.7ex\hbox{E}\kern-.125emX}}
\begin{document}
\title{Electromagnetic Hybrid Beamforming for Holographic MIMO Communications}
\author{Ran Ji,~\IEEEmembership{Graduate Student Member,~IEEE}, Chongwen~Huang,~\IEEEmembership{Member,~IEEE}, Xiaoming Chen,~\IEEEmembership{Senior Member,~IEEE}, Wei E. I. Sha,~\IEEEmembership{Senior Member,~IEEE}, Linglong Dai,~\IEEEmembership{Fellow,~IEEE}, Jiguang He,~\IEEEmembership{Senior Member,~IEEE},  Zhaoyang Zhang,~\IEEEmembership{Senior Member,~IEEE}, Chau~Yuen,~\IEEEmembership{Fellow,~IEEE},\\
and M\'{e}rouane~Debbah,~\IEEEmembership{Fellow,~IEEE}
\thanks{
  The work was supported in part by the China National Key R\&D Program under Grant 2021YFA1000500 and 2023YFB2904804; in part by the National Natural Science Foundation of China under Grant 62331023, 62101492, 62394292 and U20A20158; in part by the Zhejiang Provincial Natural Science Foundation of China under Grant LR22F010002; in part by the Zhejiang Provincial Science and Technology Plan Project under Grant 2024C01033, in part by the Zhejiang University Global Partnership Fund; and in part by the Ministry of Education (MOE), Singapore, through its MOE Tier 2 (Award number MOE-T2EP50220-0019) and A\*STAR (Agency for Science, Technology and Research) Singapore, under Grant No. M22L1b0110. \textit{(Corresponding author: Chongwen Huang.)}
\par R. Ji and C. Huang are with College of Information Science and Electronic Engineering, Zhejiang University, Hangzhou 310027, China, with the State Key Laboratory of Integrated Service Networks, Xidian University, Xi’an 710071, China, and Zhejiang Provincial Key Laboratory of Info. Proc., Commun. \& Netw. (IPCAN), Hangzhou 310027, China. 
(E-mails: \{ranji, chongwenhuang\}@zju.edu.cn). 
\par Z. Zhang is with College of Information Science and Electronic Engineering, Zhejiang University, Hangzhou 310027, China, and also with Zhejiang Provincial Key Laboratory of Info. Proc., Commun. \& Netw. (IPCAN), Hangzhou 310027, China. (E-mail: zhzy@zju.edu.cn). 
\par W. E. I. Sha is with College of Information Science and Electronic Engineering, Zhejiang University, Hangzhou 310027, China. (E-mail: weisha@zju.edu.cn). 
\par X. Chen is with School of Information and Communications Engineering, Xi'an Jiaotong University, Xi'an 710049, China. (E-mail: xiaoming.chen@mail.xjtu.edu.cn). 
\par L. Dai is with the Department of Electronic Engineering, Tsinghua University, Beijing 100084, China. (E-mail: daill@tsinghua.edu.cn). 
\par J. He is with the Technology Innovation Institute, 9639 Masdar City, Abu Dhabi, United Arab Emirates. (E-mail: jiguang.he@tii.ae). 
\par C. Yuen is with the School of Electrical and Electronics Engineering, Nanyang Technological University, Singapore 639798, Singapore. (E-mail: chau.yuen@ntu.edu.sg). 
\par M. Debbah is with KU 6G Research Center, Khalifa University of Science and Technology, PO Box 127788, Abu Dhabi, United Arab Emirates. (E-mail: merouane.debbah@ku.ac.ae).}
}


\maketitle

%

\begin{abstract}

It is well known that there is inherent radiation pattern distortion for the commercial base station antenna array, which usually needs three antenna sectors to cover the whole space.
To eliminate pattern distortion and further enhance beamforming performance, we propose an electromagnetic hybrid beamforming (EHB) scheme based on a three-dimensional (3D) superdirective holographic antenna array. 
Specifically, EHB consists of antenna excitation current vectors (analog beamforming) and digital precoding matrices, where the implementation of analog beamforming involves the real-time adjustment of the radiation pattern to adapt it to the dynamic wireless environment. Meanwhile, the digital beamforming is optimized based on the channel characteristics of analog beamforming to further improve the achievable rate of communication systems.
An electromagnetic channel model incorporating array radiation patterns and the mutual coupling effect is also developed to evaluate the benefits of our proposed scheme. 
Simulation results demonstrate that our proposed EHB scheme with a 3D holographic array achieves a relatively flat superdirective beamforming gain and allows for programmable focusing directions throughout the entire spatial domain.
Furthermore, they also verify that the proposed scheme achieves a sum rate gain of over 150\% compared to traditional beamforming algorithms.

\end{abstract}

\begin{IEEEkeywords}
Mutual coupling, superdirectivity, holographic antenna array, electromagnetic hybrid beamforming
\end{IEEEkeywords}

\section{Introduction}
5G communications have greatly benefited from the massive multiple-input-multiple-output (MIMO) and beamforming technologies, which have been proven to be effective in enhancing communication rates and reliability, leading to significant advancements compared with 4G communications. 
However, the radiation pattern of massive MIMO antenna arrays in commercial base stations has a narrow angular spread of approximately ±60 degrees \cite[Fig. 4.26]{Balanis2016}, which usually have the inherent radiation pattern distortion (i.e., radiation pattern that is non-uniform in the angular domain), especially for the directions of edge users, where the quality of service is relatively poor. In other words, a typical commercial antenna array usually exhibits a radiation gain loss of more than 4 dB (up to 20 dB especially for fixed broadside beamforming) in the edge directions, i.e., 60-degree direction, compared with the 0-degree (normal direction) \cite[Fig. 6.7]{Balanis2016}.
It is natural to think if it is possible to achieve a consistent and high-gain radiation pattern for users in all directions by leveraging a programmable radiation pattern and beamforming approaches.

\subsection{Prior Works}
Hybrid beamforming in millimeter wave or terahertz communication systems can partially address this issue, but its primary objectives remain to reduce hardware cost and optimize both the digital and analog beamforming matrices to approach the performance of digital solutions. 
Specifically, Huang \textit{et al.}\cite{xiaojinghuang1} proposed a hybrid beamforming architecture that combines low-cost analog radio-frequency (RF) chains with digital baseband precoding. Subsequently, Orthogonal Matching Pursuit (OMP) was employed for sparse channel reconstruction\cite{A.Alkhateeb1}.
Moreover, corresponding channel estimation and beamforming algorithm designs based on a hierarchical multi-resolution codebook were also provided. Yu \textit{et al.} \cite{XYu} evaluated the performance of fully-connected and partially-connected hybrid beamforming schemes based on the principle of alternating minimization. 
By contrast, a sub-connected architecture was proposed in \cite{XGao2016}, where the successive interference cancellation method is employed to derive closed-form digital and analog precoders.
Li \textit{et al.} \cite{XLi} introduced a deep learning approach to hybrid beamforming design with channel encoder and precoder networks to perform compressive channel sensing and hybrid beam prediction, respectively.
Moreover, a dynamic array-of-subarray architecture was further proposed, where the connections between analog precoders and RF chains can be dynamically adjusted to be adaptive to different channel conditions \cite{LYan2020}.
Although the analog part of hybrid beamforming can optimize the radiation pattern to some extent, it cannot completely eliminate the problem of pattern distortion.

\par To solve this problem, there have been some works showing that superdirectivity can be a potential approach to address the issue of distortion in edge radiation patterns. Specifically, by employing highly directive spatial beams, it is possible to achieve minor sidelobe energy leakage and improve the array gain.
In superdirectivity antenna theory, excitation current vector and array geometry design have been investigated in earlier research \cite{Uzkov,MTIvrlac2,E.Bjornson,A.Bloch,MTIvrlac}. It was demonstrated in \cite{Uzkov,MTIvrlac2} that the directivity of a linear array of $M$ isotropic antennas can reach $M^2$ as the spacing between antennas approaches zero.
Furthermore, the near-field characteristics of superdirectivity were investigated in \cite{Levin2021} from a linear transformation perspective. Subsequent research in this field \cite{E.Bjornson,A.Bloch,MTIvrlac} mainly focused on the excitation coefficient design for superdirective beams under different application limitations.  
Although several limitations of superdirective antenna arrays have been identified, such as low radiation efficiency, high excitation precision requirements, and low bandwidth, the mutual coupling effect for antenna elements with intervals less than half the wavelength is ignored in the aforementioned works. In 2019, Marzetta \textit{et al.} \cite{T.L.Marzetta} introduced the concept of superdirective arrays into communication systems. 
For further improved beamforming gain, subsequent works \cite{LHanSuperdirectiveConferene,LHanSuperdirectiveJournal,LHanSuperdirectiveMultiuser} proposed mutual coupling-based superdirective antenna arrays. Specifically, in the proposed analog beamforming technique, mutual coupling effects were utilized to generate superdirective spatial beams \cite{LHanSuperdirectiveConferene,LHanSuperdirectiveJournal}, which is a desirable radiation pattern for further improvement of system performance.
Through simulations and experimental measurements, $M^2$ directivity has been verified in the end-fire direction for linear arrays\cite{LHanSuperdirectiveJournal}. Although superdirectivity can address the pattern distortion issue for edge users, it may not provide coverage for other spatial regions.

\par Furthermore, superdirectivity requires densely placed antenna elements, which aligns with the recently proposed concept of holographic MIMO (HMIMO). HMIMO is an advanced technology building upon the concept of massive MIMO and involves integrating a massive (possibly infinite) number of antennas into a compact space. 
The pioneering works \cite{CHuang} and \cite{Emil2024} provided an initial introduction to the emerging HMIMO wireless communication, in particular the available HMIMO hardware architectures, functionalities and characteristics. Such surfaces comprising dense electromagnetic (EM) excited elements are capable of recording and manipulating impinging fields with utmost flexibility and precision, in addition to reduced cost and power consumption, thereby shaping arbitrary-intended EM waves with high energy efficiency.
Furthermore, a comprehensive overview of the latest advances in the HMIMO communications paradigm is presented in \cite{TGong1}, with a special focus on their physical aspects, theoretical foundations, as well as the enabling technologies for HMIMO systems.

\par In particular, the degree-of-freedom (DOF) characteristics of HMIMO were investigated in \cite{PizzoNyquist} for far-field communication scenarios, revealing the optimized spatial sampling locations of electromagnetic fields on HMIMO surfaces. To enable near-field HMIMO communications, the additional near-field DOF performance was revealed in \cite{RJi}. Moreover, \cite{ganxu2023, TGong2,LWeiTripolarNearField} further investigated near-field  characteristics and performances of HMIMO systems.
Additionally, the mutual coupling effect between antenna elements is prevalent in dense HMIMO arrays and the impact of mutual coupling on array radiation pattern and HMIMO performance was numerically studied in \cite{SYuanEffectsofMutualCoupling}. 
For the hardware design and implementation, an electromagnetic modulation scheme of HMIMO were proposed in \cite{Diboya1}, which is enabled by metamaterial antennas with tunable radiated amplitudes. Moreover, a holographic hybrid beamforming scheme was proposed in \cite{Diboya2,Diboya3} based on the aforementioned HMIMO hardware.


\subsection{Our Contributions}
In this paper, we introduce an electromagnetic hybrid beamforming (EHB) scheme based on three-dimensional (3D) antenna array architecture into HMIMO communications to implement a programmable array radiation pattern in any desired direction with minor radiation pattern distortion. Then, an electromagnetic channel model and corresponding beamforming algorithms are investigated to improve the spectral efficiency. The contributions of the paper are listed as follows.
  \begin{itemize}
    \item We first propose an EHB scheme with a holographic antenna array. Specifically, the EHB scheme consists of antenna excitation current vectors (analog beamforming) and digital precoding matrices. The implementation of analog beamforming enables the real-time adjustment of the radiation pattern to achieve an approximately constant gain in different intended directions and the digital beamforming is optimized based on the channel characteristics of analog beamforming. Moreover, we also develop electromagnetic signal and channel models in EHB communication schemes, which provide explanations of the modulation effects of the EHB scheme and the distribution of emitted energy from different perspectives.
    \item We investigate the hardware design of superdirectivity and 3D array geometry. Additionally, the optimal geometric arrangement of a 3D antenna array is also presented. Utilizing the 3D holographic antenna array structure and the customizable spatial radiation pattern, multiple superdirective beams in arbitrary directions can be generated by leveraging analog excitation vectors and taking the mutual coupling effects into account.
    \item Simulation results are finally provided to demonstrate that our proposed EHB scheme with a 3D holographic array achieves a relatively flat superdirective beamforming gain throughout the entire spatial domain. A sum rate gain of over 150\% compared to traditional beamforming algorithms is also obtained by our proposed EHB. 
  \end{itemize}

\subsection{Organization and Notation}
\par \textit{Organization:} The remainder of this paper is organized as follows. In Section \uppercase\expandafter{\romannumeral2}, we introduce the 
electromagnetic system model and the derivation of mutual coupling matrix. Then, an electromagnetic channel model is developed for further investigation. 
In Section \uppercase\expandafter{\romannumeral3}, we present the design principles of 3D antennas and the comprehensive EHB scheme, along with the corresponding algorithms. 
Numerical results are presented in Section \uppercase\expandafter{\romannumeral4} to evaluate the effectiveness of our proposed EHB scheme with 3D holographic array structure. Conclusions are drawn in Section \uppercase\expandafter{\romannumeral5}.

\par \textit{Notation:} Fonts $a$, $\boldsymbol{a}$ and $\boldsymbol{A}$ represent scalars, vectors and matrices, respectively. $\boldsymbol{A}^T, \boldsymbol{A}^H, \boldsymbol{A}^{-1}, \boldsymbol{A}^\dagger$ denote transpose, Hermitian, inverse and Moore-Penrose inverse operation respectively. $\|\boldsymbol{A}\|_F$ represents Frobenius norm of matrix $\boldsymbol{A}$. Operators $|\cdot|$ and $(\cdot)^*$ denote the modulus and conjugate operation respectively. Operators ${\rm Tr}(\cdot)$ and ${\rm diag}(\cdot)$ represent calculating the trace of a matrix and expanding a vector into a diagonal matrix, respectively.

\section{Electromagnetic System Model}
\par Different from the conventional hybrid beamforming scheme, we leverage the programmable electromagnetic characteristics of the holographic antenna array to solve the radiation pattern distortion. To this end, three challenges arise. \textit{First}, it is non-trivial to accurately model the array radiation pattern of a densely placed holographic antenna array due to the mutual coupling \cite{ChenXiaoming,Gradoni,Faqiri}. 
\textit{Second}, there is a need for further investigation into the impact of variations in the array radiation pattern on the performance of wireless communication systems.
\textit{Third}, although superdirective beams exhibit high directional gain, they can only be generated in end-fire directions for linear arrays \cite{LHanSuperdirectiveJournal,LHanSuperdirectiveMultiuser}, which limits its practical application. The solutions to these three challenges will be discussed in the following two sections.

\par In this section, we present the electromagnetic transmit signal and channel models in HMIMO communications.

\subsection{Electromagnetic Transmission Signal Model}
\vspace{0ex}
\par For densely placed holographic arrays, the effect of mutual coupling between the antennas can cause significant variations in the magnitude and phase characteristics of radiation patterns, resulting in deviations from the ideal analytical expressions. 
To accurately model the impact of these electromagnetic effects on signal transmission, we develop the following electromagnetic communication model.

\par Consider a HMIMO communication system depicted in Fig. \ref{transmitmodel}, where $N_T$ transmit antennas are densely arranged within a fixed array aperture.
The adaptability of this hardware structure can be achieved with two components: digital beamforming $\boldsymbol{W}_{BB}$ through RF chains and analog beamforming $\boldsymbol{i}$ through antenna excitation currents.
Firstly, we consider the conventional scenario where the antenna elements in the transmit MIMO array are assumed to be uncoupled. 
When the transmit signal $\boldsymbol{x}$ is emitted through the holographic array transmitter, it naturally undergoes modulation by the pattern function of each antenna element, which describes the power distribution of its spatial radiation. Therefore, we can model the emitted signal as:
\vspace{-0ex}
\begin{equation}
  \boldsymbol{x}^{(e)} = \boldsymbol{i} \odot \boldsymbol{g}(\boldsymbol{r,\theta,\phi}) \odot \boldsymbol{x} = \boldsymbol{i} \odot \boldsymbol{g}(\boldsymbol{r,\theta,\phi}) \odot \boldsymbol{W}_{BB}\boldsymbol{t},
  \label{uncoupled signal}
\end{equation}
where $\odot$ represents element-wise product;  
$\boldsymbol{t} \in \mathbb{C}^{K\times 1}$, $\boldsymbol{W}_{BB} \in \mathbb{C}^{N_T \times K}$,  $\boldsymbol{i} \in \mathbb{C}^{N_T \times 1}$ and $\boldsymbol{g}(\boldsymbol{r,\theta,\phi})$ represent the symbol vector for $K$ users, the digital beamforming matrix, the antenna excitation vector and a $N_T\times 1$ vector function representing the radiation pattern of antenna elements, respectively. The last one can be formulated as
\begin{equation}
  \begin{aligned}
    \boldsymbol{g}(\boldsymbol{r,\theta,\phi}) = [g(r_1,\theta_1,\phi_1),...&,g(r_n,\theta_n,\phi_n),... \\
    & ,g(r_{N_T},\theta_{N_T},\phi_{N_T})]^T,
    \label{radiation pattern function with r}
  \end{aligned}
\end{equation}
where $r_n$, $\theta_n$, and $\phi_n$ represent the distance, elevation angle, and azimuth angle, respectively, of the measurement point relative to the $n$th transmit antenna. Note that in the far-field communication scenario, the parameter $r_n$ can be disregarded, and $\theta_n$ and $\phi_n$ remain the same for all transmit antennas.

\par Secondly, we model the impact of the non-negligible mutual coupling effect on the transmission signal. 
After introducing the mutual coupling matrix \cite{LHanSuperdirectiveConferene}, the distorted transmission signal due to the mutual coupling effect can be written as
\begin{equation}
  \boldsymbol{x}^{(c)} = \boldsymbol{C}\boldsymbol{x}^{(e)} = \boldsymbol{C} \boldsymbol{i} \odot \boldsymbol{g}(\boldsymbol{r,\theta,\phi}) \odot \boldsymbol{W}_{BB}\boldsymbol{t},
  \label{coupled signal}
\end{equation}
where $c_{mn}$ represents the coherence of the $n$th element on the $m$th element. 

\vspace{-1ex}
\subsection{Radiation Pattern Model}
In this subsection, we investigate the mutual coupling matrix introduced in Section \uppercase\expandafter{\romannumeral2}. A in detail to model the actual radiation pattern of the densely placed holographic antenna array. Although some research has attempted to derive analytical expressions for mutual coupling between closely spaced antennas \cite{XChen,Masouros,PWang,S.Sadat}, practical results often exhibit significant deviations from these analytical expressions when the idealized conditions for theoretical derivations are not met. 
Therefore, in order to accurately evaluate the effects of mutual coupling on practical radiation patterns, we obtain the coupling matrix based on full-wave electromagnetic field simulation results in both coupled and uncoupled scenarios, as well as spherical wave expansions. This approach decomposes the electromagnetic field into a series of spherical wave coefficients using a set of orthogonal spherical wave bases. 

\par The actual radiation pattern of the $m$th element can be written as \cite{LHanSuperdirectiveJournal,S.Sadat}:
\begin{equation}
  g_{m,r}(r,\theta,\phi) = \sum\limits_{n=1} \limits^{N_T}c_{mn}i_{n}g_{n,i}(r,\theta,\phi),
\end{equation}
where the first subscript represents the antenna index, while the second subscript indicates either the realized radiation pattern ($r$) or the ideal radiation pattern ($i$).

\par The coupling matrix can be derived from spherical wave expansion, with the correlation matrix coefficients obtained by full-wave simulation results. 
With this expansion, the electric field $\boldsymbol{E}(r,\theta,\phi)$ radiated by one antenna element can be expanded as \cite{Cornelius}:
\vspace{-0.5ex}
\begin{equation}
\begin{aligned}
  \boldsymbol{e}(r,\theta,\phi)&=\kappa\sqrt{\eta}\sum\limits_{s=1}\limits^{2}\sum\limits_{n=1}\limits^{N}\sum\limits_{m=-n}\limits^{n} Q_{s,m,n}\boldsymbol{F}_{s,m,n}(r,\theta,\phi) \\
  & =\kappa\sqrt{\eta}\sum\limits_{s,m,n}Q_{s,m,n}\boldsymbol{F}_{s,m,n}(r,\theta,\phi),
  \label{sperical expansion}
\end{aligned}
\end{equation}
where $\kappa$ is the wavenumber and $\eta$ is the propagation medium intrinsic impedance. $Q_{s,m,n}$ is the spherical wave expansion coefficient and $s,m,n$ denotes electrical mode (TE/TM), degree of the wave and order of the wave, respectively. $N$ is the maximum order of available spherical wave functions, and is determined by the array aperture size. According to Chu-Harrington Limitation \cite{Chu,Harrington}, the available order of spherical wave expansion is determined by the electrical size of the transmitter as $N\approx\kappa a$, where $a$ is the minimum radius of a spherical surface which can enclose the antenna. To be specific, applying an order larger than $\kappa a$ will result in an exponentially increasing Q factor, bringing a narrow-banded antenna. Moreover, $\boldsymbol{F}_{s,m,n}(r,\theta,\phi)$ is the well-defined spherical wave function and the general form is shown in (\ref{spherical wave function}) \cite{Cornelius} (in the bottom of the next page). In this formula, $\boldsymbol{\hat{r}}$, $\boldsymbol{\hat{e}}_{\theta}$ and $\boldsymbol{\hat{e}}_{\phi}$ are unit vectors. $\boldsymbol{z}_{n}(\kappa r)$ is the spherical Hankel function of the first kind, and the term $P_n^{\left| m \right|}$(cos$\theta$) in (\ref{spherical wave function}) is the normalized associated Legendre function. For TM waves, the magnetic field is described by $\boldsymbol{F}_{1mn}$, and $\boldsymbol{F}_{2mn}$ represents the corresponding electric fields. For TE waves, this is interchanged.

The electric field can be decomposed into:
\begin{equation}
  \boldsymbol{e}(r,\theta,\phi)=\left[E^{\hat{r}}(r,\theta,\phi),E^{\hat{\theta}}(r,\theta,\phi),E^{\hat{\phi}}(r,\theta,\phi)\right],
  \vspace{-0.5ex}
\end{equation}
with the superscripts $\hat{r}$, $\hat{\theta}$ and $\hat{\phi}$ representing $r$-component, $\theta$-component, and $\phi$-component, respectively.

\begin{figure*}[htbp]
  \centering
  \includegraphics[width=5.5in]{./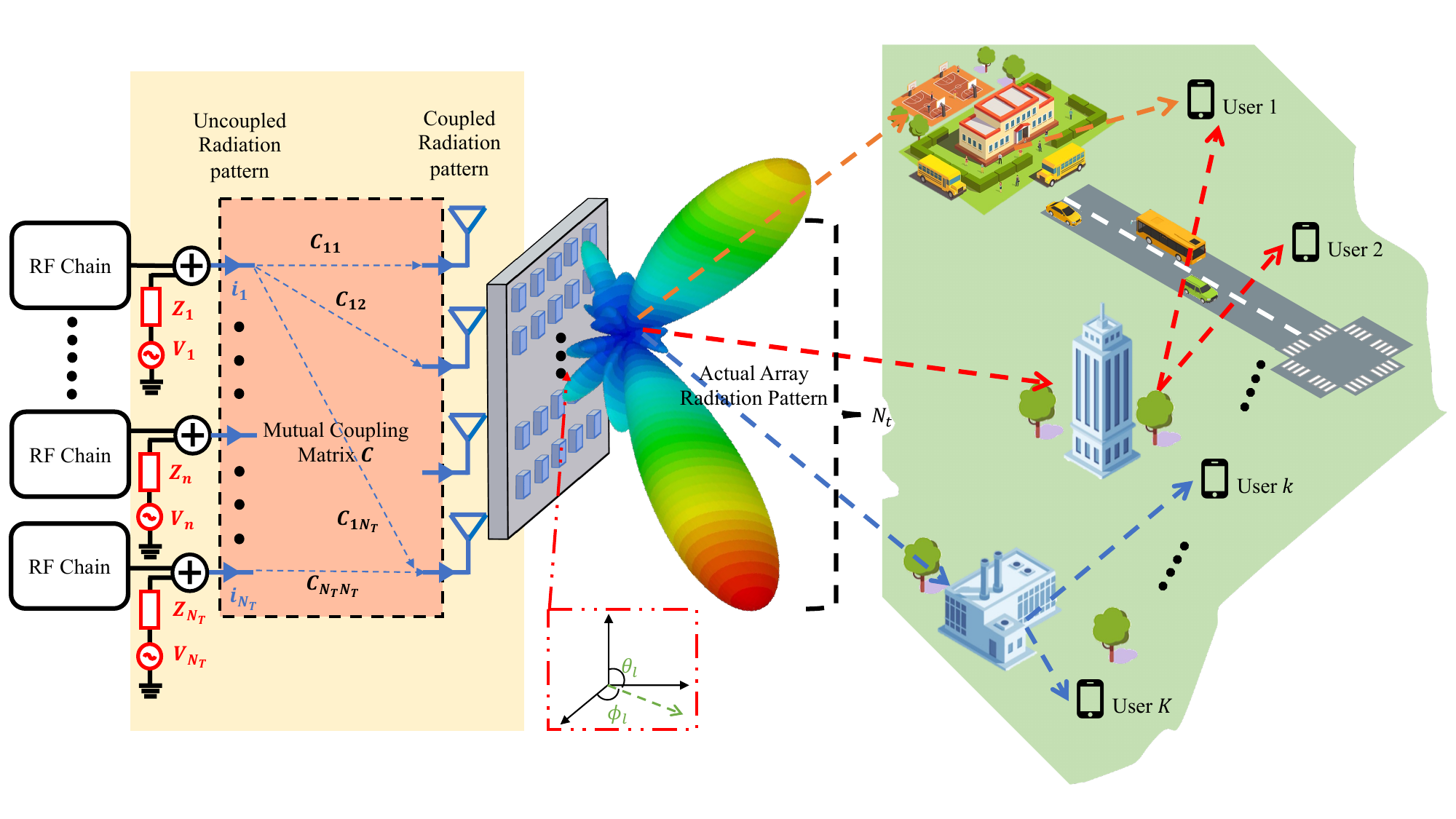}
  \vspace{-2ex}
  \caption{Electromagnetic Hybrid Beamforming Architecture.}
  \label{transmitmodel}
  \vspace{-3ex}
\end{figure*}

\begin{figure*}[!b]
  \begin{subequations}
    \begin{align}
      &\boldsymbol{F}_{mn}(r,\theta,\phi)=\frac{1}{\sqrt{2\pi}}\frac{1}{\sqrt{n(n+1)}}\left(-\frac{m}{\left|m \right|}\right)^m \boldsymbol{z}_n(\kappa r)P_n^{\left|m\right|}(\cos\theta)e^{-jm\phi} \label{spherical wave function 1}, \\ 
      &\boldsymbol{F}_{1mn}(r,\theta,\phi)=\nabla\boldsymbol{F}_{mn}(r,\theta,\phi)\times \boldsymbol{\hat{r}} \notag \\
      &=\frac{1}{\sqrt{2\pi}}\frac{1}{\sqrt{n(n+1)}}\left(-\frac{m}{\left|m\right|}\right)^m\left\{ \boldsymbol{z}_n(\kappa r)\frac{imP_n^{\left|m\right|}(\cos\theta)}{\sin\theta}e^{-jm\phi}\boldsymbol{\hat{e}}_{\theta} - \boldsymbol{z}_{n}(\kappa r)\frac{-jmP_n^{\left|m\right|}(\cos\theta)}{\sin\theta}e^{-jm\phi}\boldsymbol{\hat{e}}_{\phi}\right\}, \\
      &\boldsymbol{F}_{2mn}=\kappa^{-1}\nabla\times\boldsymbol{F}_{1mn}(r,\theta,\phi) .
    \end{align}
    \label{spherical wave function}
  \end{subequations}
\end{figure*}

\par The spherical wave expansion can be interpreted as the decomposition of an electromagnetic field into a series of orthogonal components with different basic modes. To determine the spherical mode coefficients $Q_{s,m,n}$, we can reformulate the expansion (\ref{sperical expansion}) into matrix form
\begin{equation}
  \boldsymbol{E}=\kappa\sqrt{\eta}\boldsymbol{F}\boldsymbol{Q},
\end{equation}
where $\boldsymbol{E}\in \mathbb{C}^{3P\times N_T}$, $\boldsymbol{F}\in \mathbb{C}^{3P\times 2N(N+2)}$, and $\boldsymbol{Q} \in \mathbb{C}^{2N(N+2) \times N_T}$represent the spatially sampled electric field matrix, the basic spherical mode matrix, and the spherical mode coefficient matrix, respectively, with $P$ being the number of spatial sampling points.
Their specific representations for the $n$th antenna are as follows \cite{KBelmkaddem}
\begin{equation}
  \resizebox{0.9\hsize}{!}{$
  \begin{aligned}
    &\boldsymbol{E}(:,n)=[E^{\hat{r}}_n(r_1,\theta_1,\phi_1),E^{\hat{\theta}}_n(r_1,\theta_1,\phi_1),E^{\hat{\phi}}_n(r_1,\theta_1,\phi_1), \\ 
    & \cdots, E^{\hat{r}}_n(r_P,\theta_P,\phi_P),E^{\hat{\theta}}_n(r_P,\theta_P,\phi_P),E^{\hat{\phi}}_n(r_P,\theta_P,\phi_P)]^T,
  \end{aligned}$}
\end{equation}
\begin{equation}
  \begin{aligned}
    \boldsymbol{Q}(:,n)=[Q_{1,-1,1}^n,Q_{2,-1,1}^n,\cdots,Q_{2,N,N}^n]^T,
    \label{spherical mode coefficient}
  \end{aligned}
\end{equation}
\begin{equation}
  \resizebox{0.85\hsize}{!}{$
  \boldsymbol{F}=
  \left[
    \begin{matrix}
      F^{\hat{r}}_{1,1,1}(r_1,\theta_1,\phi_1) & \cdots & F^{\hat{r}}_{2,N,N}(r_1,\theta_1,\phi_1) \\
      F^{\hat{\theta}}_{1,1,1}(r_1,\theta_1,\phi_1) & \cdots & F^{\hat{\theta}}_{2,N,N}(r_1,\theta_1,\phi_1) \\
      F^{\hat{\phi}}_{1,1,1}(r_1,\theta_1,\phi_1) & \cdots & F^{\hat{\phi}}_{2,N,N}(r_1,\theta_1,\phi_1) \\
      \vdots & \vdots & \vdots \\
      F^{\hat{r}}_{1,1,1}(r_P,\theta_P,\phi_P) & \cdots & F^{\hat{r}}_{2,N,N}(r_P,\theta_P,\phi_P)\\ 
      F^{\hat{\theta}}_{1,1,1}(r_P,\theta_P,\phi_P) & \cdots & F^{\hat{\theta}}_{2,N,N}(r_P,\theta_P,\phi_P)\\
      F^{\hat{\phi}}_{1,1,1}(r_P,\theta_P,\phi_P) & \cdots & F^{\hat{\phi}}_{2,N,N}(r_P,\theta_P,\phi_P)\\ 
    \end{matrix}
  \right],$}
\end{equation}
where $F^{\hat{r}}_{s,m,n}(r_n,\theta_n,\phi_n)$, $F^{\hat{\theta}}_{s,m,n}(r_n,\theta_n,\phi_n)$ and $F^{\hat{\phi}}_{s,m,n}(r_n,\theta_n,\phi_n)$ represent different orthogonal components of the spherical wave function. 
Based on the above matrix representation, we can derive the spherical mode coefficients in (\ref{spherical mode coefficient}) as 
\begin{equation}
  \boldsymbol{Q}=\frac{1}{\kappa \sqrt{\eta}}\left(\boldsymbol{F}\right)^{\dagger}\boldsymbol{E}.
\end{equation}

\par After obtaining the full-wave simulation electromagnetic fields for both the coupled and uncoupled cases, the mutual coupling matrix corresponding to the holographic antenna array can be represented by 
\begin{equation}
  \boldsymbol{C} = \boldsymbol{Q}_C^T(\boldsymbol{Q}_S^T)^\dag,
\end{equation}
where $\boldsymbol{Q}_C$ is the spherical mode coefficient matrix for coupled cases and $\boldsymbol{Q}_S$ is the uncoupled one. In summary, we investigate the effects of mutual coupling by regarding the actual radiation pattern of each antenna element as a superposition of the ideal radiation patterns of all elements in the array. The mutual coupling coefficients serve as the weighting coefficients for this superposition. Note that the mutual coupling coefficients are computed based on radiation fields, and therefore the computed coefficients are at the field level.

\par The impact of mutual coupling is illustrated in Fig. \ref{spatial beam}. The first two subfigures depict the spatial radiation patterns of an isolated antenna element and a coupled one.
Fig. \ref{spatial beam}(c) displays the spatial superdirective beam achieved by leveraging mutual coupling, whereas Fig. \ref{spatial beam}(d) shows the spatial beam using a conventional beamforming algorithm. It is evident that neglecting mutual coupling leads to a significant reduction in directivity.

  \begin{figure}[!ht]
    \vspace{-2ex}
    \centering
    \includegraphics[width=3.3in]{./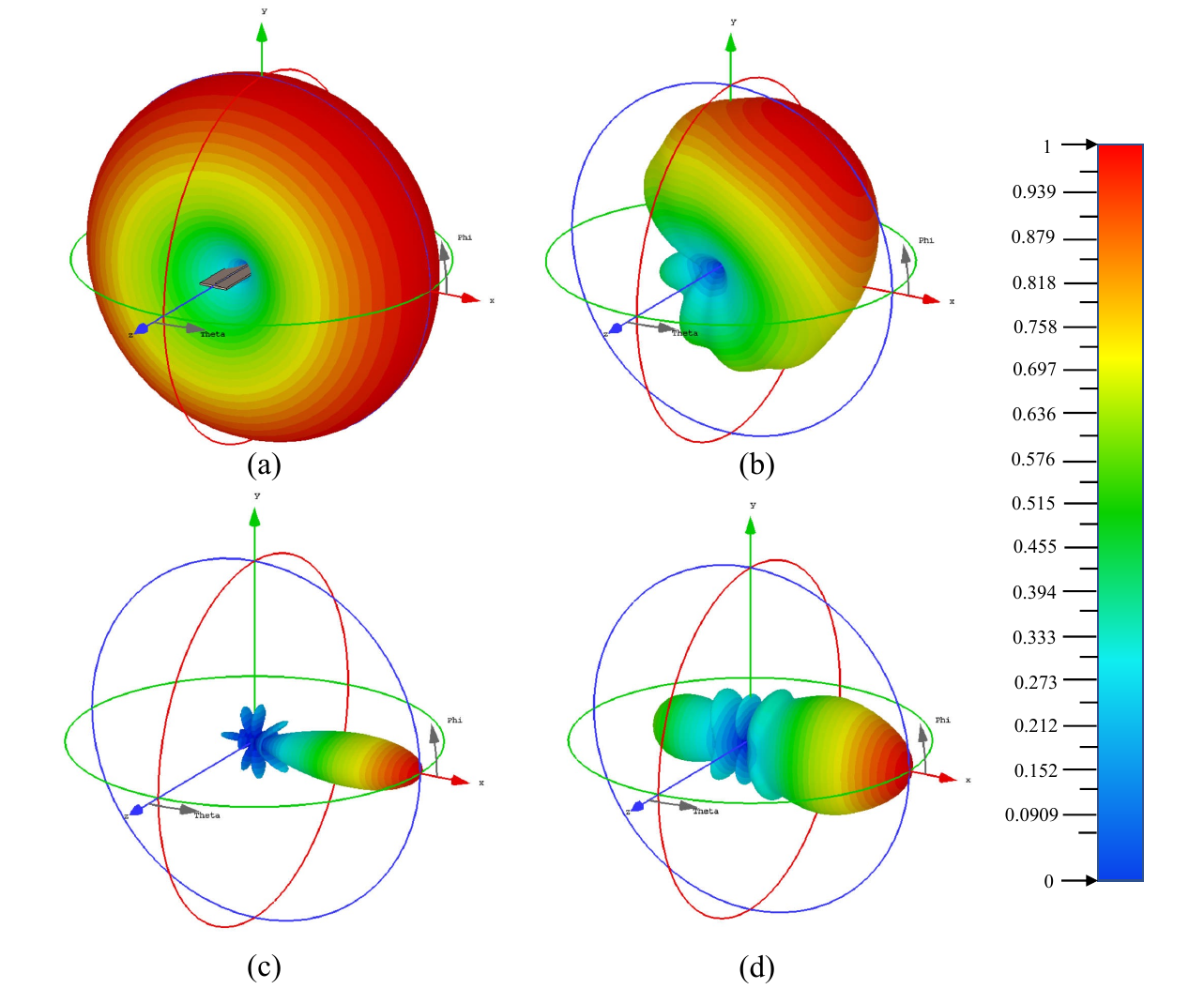}
    \caption{Radiation pattern for (a) An isolated antenna; (b) One antenna in holographic antenna array with 0.2$\lambda$ spacing; (c) Spatial beam based on mutual coupling; (d) Spatial beam without considering mutual coupling.}
    \label{spatial beam}
    \vspace{-1ex}
  \end{figure}

\subsection{Electromagnetic Channel Model}
\par In this subsection, we incorporate the above electromagnetic characteristics of holographic antenna array into a conventional geometrically based stochastic channel model (GSCM) to take both the stochastic characteristics of wireless environment and the spatial radiation pattern of the transmitter into consideration. 

\par Assuming that there are $L$ scatters and $K$ users in the far field region of interest (i.e., we omit variable $\boldsymbol{r}$ in the radiation pattern function (\ref{radiation pattern function with r})), the classical Saleh-Valenzuela channel model for a single-antenna user $k$ can be written as \cite{DTse,OmarEl2014}
\begin{equation}
  \resizebox{0.89\hsize}{!}{$
  \boldsymbol{h}_{k} = \sum \limits_{l=1}^{L}\sqrt{P_{lk}}\alpha_{lk}  \exp(-j2\pi f\tau_{lk}) \boldsymbol{C} \boldsymbol{i} \odot \boldsymbol{g}(\boldsymbol{\theta,\phi}) \delta(\Omega - \Omega_l),$}
  \label{channel vector for k}
\end{equation}
where $\tau_{lk}$, $\alpha_{lk}$ and $P_{lk}$ represent the propagation time delay, channel coefficient and power of the $l$th path, respectively. $\Omega_l$ represents the solid angle of the $l$th scatterer relative to the transmitter.
With $s_{lk} = \sqrt{P_{lk}}\alpha_{lk}\exp(-j2\pi f \tau_{lk})$, the received signal for user $k$ can be formulated as
\begin{equation}
  \begin{aligned}
  y_k &= \int \boldsymbol{h}_k^T \boldsymbol{x} d\Omega = \int \sum \limits_{l=1}^{L}s_{lk}\delta(\Omega - \Omega_l) \mathbf{1}^T\boldsymbol{x}^{(c)} d\Omega + n\\
  & = \sum \limits_{l=1}^{L} s_{lk}\mathbf{1}^T (\boldsymbol{C} \boldsymbol{i} \odot \boldsymbol{g}(\Omega_l) \odot \boldsymbol{x}) + n\\
  & = \sum \limits_{l=1}^{L} s_{lk}\boldsymbol{x}^T (\boldsymbol{C} \boldsymbol{\hat{I}} \boldsymbol{g}(\Omega_l)) + n,
  \end{aligned}
  \label{receivedsignalk}
\end{equation}
where $\mathbf{1} \in \mathbb{C}^{N_T \times 1}$, $\boldsymbol{\hat{I}} = {\rm diag}(\boldsymbol{i})$, and $n$ represent an all-one vector, a diagonal matrix representing analog antenna excitations, and the Gaussian noise term, respectively. Based on (\ref{receivedsignalk}), the received signal can be written in the matrix form
\begin{equation}
  \boldsymbol{y} = (\boldsymbol{C} \boldsymbol{\hat{I}} \boldsymbol{G S})^T \boldsymbol{x} + \boldsymbol{n},
  \label{receivedsignal}
\end{equation}
where $\boldsymbol{S} \in \mathbb{C}^{L\times K}$, $\boldsymbol{G} \in \mathbb{C}^{N_T\times L}$, and $\boldsymbol{n}$ are the channel coefficient matrix, the spatial sample of the radiation pattern, and an i.i.d. Gaussian noise vector with element variance $\sigma^2$, respectively. 
$\boldsymbol{G}$ describes the spatial radiation pattern corresponding to the $L$ scatterers and can be obtained after acquiring the spatial coordinates of scatterers in the environment through sensing or channel estimation.
In practical systems, channel state information at the receiver can be obtained via training and subsequently shared with the transmitter via feedback. Techniques for efficient channel estimation leveraging the geometric nature of wireless environments can be found in literature \cite{Bajwa2010,Rangan2011,chaokaiwen2015,ganxu2022,additional1,additional2}, 
as well as deep learning-based approaches \cite{yuwenyang2019,hengtaohe2018,ruxin2020,chunchangjae2019,fenghaozhu2024} or beam scanning techniques \cite{HurSooyoung2013,yacongding2018,Emil2023}.
The derived electromagnetic GSCM can be written as 
\begin{equation}
  \boldsymbol{H} = (\boldsymbol{C}\boldsymbol{\hat{I}}\boldsymbol{G}\boldsymbol{S})^T \in \mathbb{C}^{K\times N_T},
  \label{Hmatrix}
\end{equation}

\vspace{-3ex}
\subsection{Holographic Antenna Array Radiation Efficiency}
\par Besides the impact of distorted radiation patterns on system performance, another significant factor affecting the realized communication rate is antenna radiation efficiency. Based on existing research on superdirective antenna arrays \cite{Hansen,McLean}, it has been observed that as the interval space between antenna elements decreases, the achievable directivity increases. However, corresponding radiation efficiency decreases, resulting in a potential deterioration in performance due to the decrease in the actual realized antenna gain. Therefore, in practical terms, a denser antenna array does not necessarily guarantee better system performance.
The definition of antenna efficiency is given as follows
\begin{equation}
  e_{\text{rad}}=\frac{P_{\text{rad}}}{P_{\text{acc}}}=\frac{G(\theta,\phi)}{D(\theta,\phi)},
\end{equation}
where $e_{\text{rad}}$, $P_{\text{rad}}$, and $P_{\text{acc}}$ are radiation efficiency, the radiated power, and the accepted power of the antenna, respectively. $D(\theta,\phi)$ and $G(\theta,\phi)$ are defined as the directivity and gain of the antenna, and their specific definitions are provided as follows
\begin{equation}
  \begin{aligned}
    D(\theta,\phi)&=4\pi\cdot\frac{U}{P_{\text{rad}}}, \\
    G(\theta,\phi)&=4\pi\cdot\frac{U}{P_{\text{rad}}}\frac{R_{\text{rad}}}{R_{\text{rad}}+R_{\text{dis}}},
  \end{aligned}
\end{equation}
where $R_{\text{rad}}$, $R_{\text{dis}}$, and $U = r^2W_{\text{rad}}$ represent the radiation resistance, dissipation resistance, and radiation intensity ( Watt/unit solid angle), respectively. 

\par Since we achieve superdirectivity by leveraging the mutual coupling effect of the holographic array transmitter, a smaller element spacing is preferred for a stronger coupling and more directive spatial beams to increase the concentration of radiated energy. On the other hand, holographic antenna arrays with smaller space between elements tend to have larger dissipation resistances, resulting in a lower radiation efficiency. Therefore, there exists a tradeoff between the radiation efficiency and the directive gain since the actual realized array gain is the product of them (i.e., $G(\theta,\phi) = e_{\text{rad}} \cdot D(\theta,\phi)$), which means that there is an optimal array geometry, and our following simulation results also further validate this conjecture.
\vspace{-2ex}
\section{Proposed EHB Scheme}
To leverage superdirectivity and programmable array radiation pattern, the EHB scheme is introduced in this section. 
We consider the antenna radiation efficiency to indicate the received signal
power in practical scenario. Then, the 3D holographic array architecture and  the electromagnetic beamforming algorithm will be provided in the sequel. 

\subsection{3D Holographic Antenna Array Structure}
The directivity expression for an antenna array can be written as follows in the absence of mutual coupling:
\begin{equation}
  D = \frac{\boldsymbol{i}^H \left|g(\theta,\phi)\right|^2 \boldsymbol{e}(\theta,\phi)\boldsymbol{e}^H(\theta,\phi)\boldsymbol{i}}{\boldsymbol{i}^H \left(\int_{0}^{2\pi}\int_{0}^{\pi}\left|g(\theta,\phi)\right|^2 \boldsymbol{e}(\theta,\phi)\boldsymbol{e}^H(\theta,\phi)\sin(\theta)d\theta d\phi\right)\boldsymbol{i}},
\end{equation}
where $\boldsymbol{e} = \left[e^{-j\kappa\boldsymbol{\hat{r}} \cdot \boldsymbol{r}_1},e^{-j\kappa\boldsymbol{\hat{r}} \cdot \boldsymbol{r}_2},\hdots,e^{-j\kappa\boldsymbol{\hat{r}} \cdot \boldsymbol{r}_{N_T}}\right]^T$ represents the array steering vector, with $\boldsymbol{\boldsymbol{\hat{r}}}$ being the unit vector in direction $(\theta,\phi)$ and $\boldsymbol{r}_{n}$ being the Cartesian coordinates of the $n$th antenna. To maximize the achievable directivity, the optimal $\boldsymbol{i}$ can be derived as:
\begin{equation}
  \boldsymbol{i}_{\text{opt}} = \gamma \boldsymbol{Z}^{-1}\boldsymbol{e}(\theta,\phi),
  \label{supercurrent}
\end{equation}
with $\boldsymbol{Z} = \frac{\int_{0}^{2\pi}\int_{0}^{\pi}\left|g(\theta,\phi)\right|^2 \boldsymbol{e}(\theta,\phi)\boldsymbol{e}^H(\theta,\phi)\sin(\theta)d\theta d\phi}{\int_{0}^{2\pi}\int_{0}^{\pi}\left|g(\theta,\phi)\right|^2\sin(\theta)d\theta d\phi}$ denoted as the impedance matrix and $\gamma$ denoting the power normalization factor. For lossless antennas, the radiation power $P_{\text{rad}}$ equals the input power $P_{\text{in}}$, and the maximum array gain can be written as follows \cite{MTIvrlac}:
\begin{equation}
  \begin{aligned}
    A = \max \frac{P_{\text{Rx}}}{P_{\text{Rx}}|_{N_T=1}} = N\frac{\boldsymbol{e}^H(\theta,\phi) \boldsymbol{Z}^{-1}\boldsymbol{e}(\theta,\phi)}{\boldsymbol{e}^H(\theta,\phi)\boldsymbol{e}(\theta,\phi)},
  \end{aligned}
\end{equation}
where $P_{\text{Rx}}$ represents the received power. Specifically, the array gain depends on the the beamforming direction and array geometry (via $\boldsymbol{Z}$). For isotropic elements along the $z$ axis, the largest directivity is obtained in the direction $\theta = 0$ (end-fire), and approaches $N_T^2$, for small antenna separation.

\par Inspired by the former conclusion, we find that the superdirectivity origins from the dense element arrangements along a specific direction and the corresponding optimal excitation currents. Thus, the $\boldsymbol{Z}$ matrix of a densely placed 3D holographic antenna array would fulfill the geometry requirements of superdiectivity in arbitrary directions. Considering the mutual coupling effect of holographic antenna array on actual radiation pattern, the optimal excitation currents in (\ref{supercurrent}) can be modulated as:
\begin{equation}
  \boldsymbol{i}^{(c)}_{\text{opt}} = \gamma \boldsymbol{C}^{-1}\boldsymbol{Z}^{-1}\boldsymbol{e}(\theta,\phi),
\end{equation}
and multiple superdiective beams can be implemented through a linear combination of excitation current vectors in different directions as $\boldsymbol{i}^{(c)}_{\text{opt}} = \boldsymbol{i}^{(c)}_{\text{opt},1} + \boldsymbol{i}^{(c)}_{\text{opt},2}$. 
As an example, Fig. \ref{antenna structure} shows a typical 3D geometry with multiple layers of planar aperture. Moreover, two superdirective spatial beams in orthogonal directions (broadside and end-fire directions) are realized, demonstrating the effectiveness of our proposed structure and design principle. Therefore, the real-time optimization technique of circuit parameters can be utilized to provide an approximately coincident array gain in any desired direction and eliminate the edge pattern distortion.

 
\begin{figure}[!t]
  \centering
  \includegraphics[width=3.3in]{./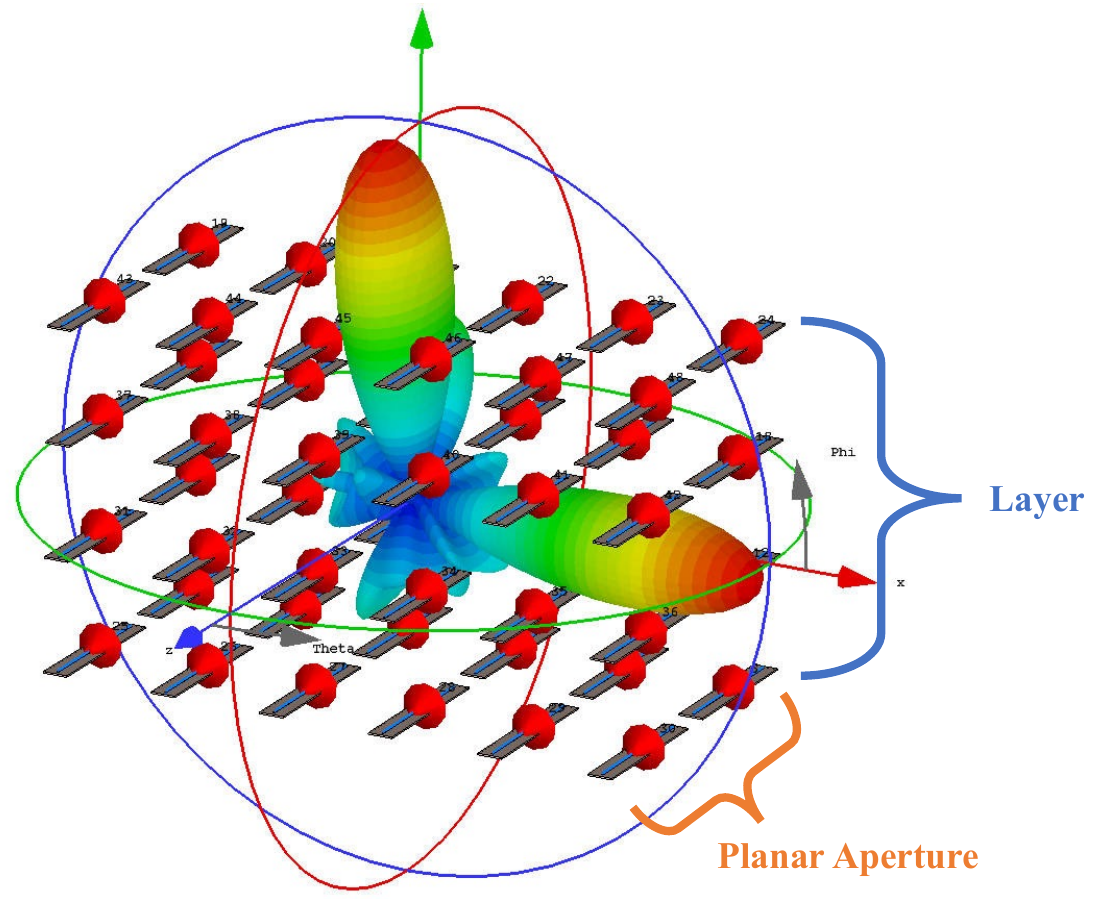}
  \caption{Multiple superdirective beams of 3D holographic antenna array.}
  \label{antenna structure}
  \vspace{-1ex}
\end{figure}

\vspace{-1ex}
\subsection{Proposed EHB Algorithm}
Based on the electromagnetic system model and the 3D antenna array structure that we have introduced, we aim to fully leverage the programmable electromagnetic characteristics of the holographic antenna array. To achieve this, we introduce EHB to modulate the radiation pattern of the array and mitigate radiation distortion.
Specifically, the characteristics of our proposed EHB algorithm are twofold:
i) Electromagnetic analog precoding method: By real-time programming the circuit parameters, including parameters of programmable antennas and the coupling between antennas, load impedance or excitation currents of the transmitter, this approach can realize the real-time adjustment of the radiation pattern to achieve an approximately coincide gain for users in various directions.
ii) Electromagnetic digital precoding method: Its objective is to obtain the optimal match filter to further improve the spectral efficiency of communication system based on the channel information after the analog precoding. Then, analog precoding and digital precoding are iteratively optimized until they both achieve the optimum performance.

\par Next we show the design details of the analog excitation vector $\boldsymbol{i}$ and the digital precoding matrix $\boldsymbol{W}_{BB}$ to optimize overall system performance based on the electromagnetic GSCM derived in Section \uppercase\expandafter{\romannumeral2}. 

\subsubsection{Single-User Case}
First we consider single-user communication scenario. Referring to the general formulas (\ref{receivedsignal}), the received signal in a single-user communication scheme can be formulated as 

\begin{equation}
  \vspace{-1ex}
  \begin{aligned}
    y &= \sqrt{e_{\text{rad}}}\boldsymbol{h}^T\boldsymbol{x}+n \\
    &= \sqrt{e_{\text{rad}}}(\boldsymbol{C}\boldsymbol{\hat{I}}\boldsymbol{Gs})^T\boldsymbol{x} + n \\
    &= \sqrt{e_{\text{rad}}}(\boldsymbol{C}\boldsymbol{\hat{I}}\boldsymbol{Gs})^T\boldsymbol{ w_{BB}}t + n ,
  \end{aligned}
\end{equation}
and the corresponding received signal-to-noise-ratio (SNR) can be represented as 
\begin{equation}
  \vspace{-1ex}
  \begin{aligned}
    \gamma &= \frac{e_{\text{rad}}| (\boldsymbol{C}\boldsymbol{\hat{I}}\boldsymbol{Gs})^T\boldsymbol{w}_{BB}|^2}{\sigma^2},
  \end{aligned}
\end{equation}
where $\boldsymbol{w}_{BB} \in \mathbb{C}^{N_T \times 1}$ represents the digital beamforming vector and $n$ is the noise term following Gaussian random distribution $\mathcal{N}(0,\sigma^2)$. 
The transmit symbol is normalized such that $|t|^2=1$. Then the corresponding EHB optimization problem can be formulated as
\begin{equation}
  \begin{aligned}
    P1: &\max \limits_{\boldsymbol{\hat{I}},\boldsymbol{w}_{BB}} \gamma \\
    s.t. & \qquad \| \boldsymbol{\hat{I}} \|_F^2 \leq P_{T,\text{analog}},  \|\boldsymbol{w}_{BB}\|^2_2 \leq P_{T,\text{digital}}\\
    & \qquad \boldsymbol{\hat{I}} = {\rm diag}(\boldsymbol{i}) \\
    & \qquad P_{T,\text{analog}} + P_{T,\text{digital}} \leq P_T ,
  \end{aligned}
  \label{P1}
\end{equation}
where $P_T$, $P_{T,\text{analog}}$, and $P_{T,\text{digital}}$ represent the available transmit power, the analog excitation power, and the digital precoding power, respectively \cite{linglongdai2023}.
Additionally, the radiation efficiency, which is mainly determined by the geometry of the 3D holographic array, also impacts the received SNR. However, for a fixed structure, different beamforming directions are not expected to cause a significant variation in radiation efficiency and the mutual coupling matrix.

\par Therefore, the proposed EHB algorithm for the single-user scheme consists of two main procedures. Firstly, a search is performed on the value of the element spacing to find an optimized 3D antenna array under the given operation frequency. This optimization aims to strike a balance between beamforming gain and radiation efficiency.
In practice, this can be achieved by referring to the performance curves of realized gain as shown in the simulation section. 
Secondly, the optimal solutions of analog and digital precoding vectors are applied to maximize the received SNR.

\par Initially, we apply the following transformation:
\begin{equation}
  \begin{aligned}
    \boldsymbol{\hat{I}}\boldsymbol{Gs} = {\rm diag}(\boldsymbol{Gs})\boldsymbol{i},
  \end{aligned}
  \label{transform1}
\end{equation}
by substituting the transformation (\ref{transform1}) into (\ref{P1}), we obtain the following formulation:
\begin{equation}
  \begin{aligned}
    P2: &\max \limits_{\boldsymbol{i},\boldsymbol{w}_{BB}} |\boldsymbol{i}^T {\rm diag}(\boldsymbol{Gs}) \boldsymbol{C}^T\boldsymbol{w}_{BB}|^2 \\
    s.t. & \qquad \| \boldsymbol{i} \|_2^2 \leq P_{T,\text{analog}},  \|\boldsymbol{w}_{BB}\|^2_2 \leq P_{T,\text{digital}}\\
    & \qquad P_{T,\text{analog}} + P_{T,\text{digital}} \leq P_T ,
  \end{aligned}
  \label{P2}
\end{equation}
\par By expanding the objective function into conjugate form, we have: 
\begin{equation}
  \begin{aligned}
    |\boldsymbol{i}^T {\rm diag}(\boldsymbol{Gs}) \boldsymbol{C}^T \boldsymbol{w}_{BB}|^2 = &\boldsymbol{i}^T {\rm diag}(\boldsymbol{Gs}) \boldsymbol{C}^T\boldsymbol{w}_{BB} \\
    & \boldsymbol{w}_{BB}^H \boldsymbol{C}^* {\rm diag}(\boldsymbol{Gs})^*\boldsymbol{i}^*,
  \end{aligned}
\end{equation}
which is a quadratic form function for variable $\boldsymbol{i}$, and the optimal solution $\boldsymbol{i}_{\text{opt}}$ corresponds to the eigenvectors of the intermediate matrix with respect to the variable $\boldsymbol{i}$:
\begin{equation}
  \boldsymbol{i}_{\text{opt}} = \alpha {\rm diag}(\boldsymbol{Gs})^*\boldsymbol{C}^H\boldsymbol{w}_{BB}^*,
  \label{singlei}
\end{equation}
where $\alpha$ is the power normalization factor to satisfy the analog power constraint. Similarly, we can derive the optimal digital precoding vector as 
\begin{equation}
  \boldsymbol{w}_{BB,\text{opt}} = \beta \boldsymbol{C}^*{\rm diag}(\boldsymbol{Gs})^*\boldsymbol{i}^*,
  \label{singlew}
\end{equation}
where $\beta$ is the power normalization factor of digital precoding. 
By combining (\ref{singlei}) and (\ref{singlew}), it is easy to prove that 
\begin{equation}
  \begin{aligned}
    & \frac{1}{\alpha\beta^*}\boldsymbol{i}_{\text{opt}} = {\rm diag}(\boldsymbol{Gs})^*\boldsymbol{C}^H \boldsymbol{C}{\rm diag}(\boldsymbol{Gs})\boldsymbol{i}_{\text{opt}}\\
    & \frac{1}{\alpha^*\beta}\boldsymbol{w}_{BB,\text{opt}} = \boldsymbol{C}^*{\rm diag}(\boldsymbol{Gs})^* {\rm diag}(\boldsymbol{Gs})\boldsymbol{C}^T\boldsymbol{w}_{BB,\text{opt}},
  \end{aligned}
  \label{eigende}
\end{equation}
by substituting (\ref{eigende}), (\ref{singlei}) and (\ref{singlew}) into $P2$, it can be easily proven that under constant power constraints, the optimal $\boldsymbol{i}$ and $\boldsymbol{w}_{BB}$ are the eigenvectors corresponding to the largest eigenvalues of matrices ${\rm diag}(\boldsymbol{Gs})^*\boldsymbol{C}^H \boldsymbol{C}{\rm diag}(\boldsymbol{Gs})$ and $\boldsymbol{C}^*{\rm diag}(\boldsymbol{Gs})^* {\rm diag}(\boldsymbol{Gs})\boldsymbol{C}^T$, respectively, and they can be obtained by taking the SVD decomposition.

\subsubsection{Multi-User Scheme}
In this subsection, we extend the previous single-user communication scheme to a multi-user scenario. Based on the electromagnetic transmit signal and channel models introduced in section \uppercase\expandafter{\romannumeral2}, the received signal can be expressed as follows:
\begin{equation}
  \begin{aligned}
    \boldsymbol{y} &= \sqrt{e_{\text{rad}}}\boldsymbol{H}\boldsymbol{x}+ \boldsymbol{n}\\
    &= \sqrt{e_{\text{rad}}}(\boldsymbol{C}\boldsymbol{\hat{I}}\boldsymbol{GS})^{T}\boldsymbol{x}+\boldsymbol{n} \\
    &= \sqrt{e_{\text{rad}}}(\boldsymbol{C}\boldsymbol{\hat{I}}\boldsymbol{GS})^{T}\boldsymbol{W}_{BB}\boldsymbol{t}+\boldsymbol{n},
    \label{received signal}
  \end{aligned}
\end{equation}
From (\ref{received signal}), it can be seen that radiation efficiency, mutual coupling matrix $\boldsymbol{C}$, and analog beamforming excitation $\boldsymbol{\hat{I}}$ are the main factors influencing system performance.  
Then, we can formulate the sum rate maximization problem as
\begin{equation}
  \begin{aligned}
    P3 : &\max \limits_{\boldsymbol{\hat{I}},\boldsymbol{W}_{BB}} \sum \limits_{k=1}^{K} \log(1+\gamma_k)\\
    s.t. & \qquad \| \boldsymbol{\hat{I}} \|_F^2 \leq P_{T,\text{analog}},  \|\boldsymbol{W}_{BB}\|^2_F \leq P_{T,\text{digital}}\\
    & \qquad \boldsymbol{\hat{I}} = {\rm diag}(\boldsymbol{i}) \\
    & \qquad P_{T,\text{analog}} + P_{T,\text{digital}} \leq P_T ,
  \end{aligned}
  \label{sumratemax}
\end{equation}
and the received signal-to-noise-plus-interference ratio (SINR) for user $k$ can be written as follows due to the mutual interference between different users
\begin{equation}
  \gamma_k = \frac{ e_{\text{rad}} |[(\boldsymbol{C}\boldsymbol{\hat{I}}\boldsymbol{GS})(:,k)]^T \boldsymbol{W}_{BB}(:,k)|^2 }{ e_{\text{rad}}\sum \limits_{i\neq k}|[(\boldsymbol{C}\boldsymbol{\hat{I}}\boldsymbol{GS})(:,k)]^T\boldsymbol{W}_{BB}(:,i)|^2 + \sigma^2},
  \label{multiuserSNR}
\end{equation}
where $\boldsymbol{A}(:,k)$ represents the $k$th column of matrix $\boldsymbol{A}$.

\par It can be seen that Eq. (\ref{multiuserSNR}) is a classical multi-user digital precoding problem. 
However, for analog precoding design, it is still challenging to directly optimize the precoding matrix $\boldsymbol{I}$ based on (\ref{multiuserSNR}). Therefore, we reformulate (\ref{multiuserSNR}) and propose a unified optimization procedure for both analog and digital precodings. By defining $\boldsymbol{X} = \boldsymbol{G}\boldsymbol{S} \in \mathbb{C}^{N_T \times K}$, we can rewrite the $k$th column of the matrix $\boldsymbol{C}\boldsymbol{\hat{I}}\boldsymbol{GS}$ as:

\vspace{-2ex}
\begin{equation}
  \begin{aligned}
    (\boldsymbol{C}\boldsymbol{\hat{I}}\boldsymbol{GS})(:,k) &= 
    \boldsymbol{C} \times 
    \left[\begin{matrix}
          i_{1} &  0 & \hdots\\ 
          0 & \ddots & \vdots\\ 
          \vdots & \hdots & i_{N_T} \\
    \end{matrix}\right] \times 
    \left[\begin{matrix}
      x_{1k} \\ 
      x_{2k} \\ 
      \vdots \\
      x_{N_Tk}
    \end{matrix}\right] \\
    & = \boldsymbol{C} \times 
    \left[\begin{matrix}
          x_{1k} &  0 & \hdots\\ 
          0 & \ddots & \vdots\\ 
          \vdots & \hdots & x_{N_Tk} \\
    \end{matrix}\right] \times 
    \left[\begin{matrix}
      i_{1k} \\ 
      i_{2k} \\ 
      \vdots \\
      i_{N_Tk}
    \end{matrix}\right] \\
    & = \boldsymbol{C}{\rm diag}(\boldsymbol{X}(:,k))\boldsymbol{i},
  \end{aligned}
  \label{outi}
\end{equation}
where $x_{nk}$ represents the $(n,k)$th element of matrix $\boldsymbol{X}$. Leveraging (\ref{outi}), the SINR of user $k$ can be written as 
\begin{equation}
  \begin{aligned}
    \gamma_k  &= \frac{e_{\text{rad}}|\boldsymbol{i}^T {\rm diag}(\boldsymbol{X}(:,k)) \boldsymbol{C}^T \boldsymbol{W}_{BB}(:,k)|^2}{e_{\text{rad}}\sum \limits_{i\neq k} |\boldsymbol{i}^T {\rm diag}(\boldsymbol{X}(:,k)) \boldsymbol{C}^T \boldsymbol{W}_{BB}(:,i)|^2 + \sigma^2}
  \end{aligned}
\end{equation}
Then, the sum rate maximization problem for multi-user EHB can be reformulated as 
\begin{equation}
  \begin{aligned}
    P4 : & \max \limits_{\boldsymbol{i},\boldsymbol{W}_{BB}} \sum \limits_{k=1}^{K} \log(1+\gamma_k)\\
    s.t. & \qquad \| \boldsymbol{i} \|_2^2 \leq P_{T,\text{analog}},  \|\boldsymbol{W}_{BB}\|^2_F \leq P_{T,\text{digital}}\\
    & \qquad P_{T,\text{analog}} + P_{T,\text{digital}} \leq P_T ,
    \label{P4}
  \end{aligned}
\end{equation}
which is a non-convex optimization problem and it is also difficult to directly solve it. 
To address this, we employ the AO framework that iteratively optimizes the electromagnetic analog and digital precodings. Specifically, an SDR technique
is employed to solve the auxiliary matrix $\boldsymbol{R}$ of analog excitations $\boldsymbol{i}$, while auxiliary matrices $\boldsymbol{F}_k$ are used to optimize the precoding matrices (digital part) \cite{WHaoRobustDesign}\cite{LuoZhiQuan2010}.
The solution details of EHB is outlined as follows:

\par Firstly, by introducing auxiliary variables $t_k$ and $u_k$, we can reformulate $P4$ to optimize the electromagnetic digital beamforming when fixing the analog part
\begin{subequations}
  \vspace{-0.5ex}
    \begin{equation}
    P5: \max \limits_{\boldsymbol{F}_k} \sum \limits_{k=1}^{K} \log(1+t_k) \\
    \end{equation}
    \begin{equation}
    s.t. \qquad u_k \geq \sum \limits_{i\neq k} {\rm Tr}(\boldsymbol{H}_k\boldsymbol{F}_i) + \sigma^2 \\
    \end{equation}
    \begin{equation}
    \qquad \frac{t_k^{(n)}}{2u_k^{(n)}}u^2_k + \frac{u^{(n)}_k}{2t^{(n)}_k}t^2_k \leq {\rm Tr}(\boldsymbol{H}_k\boldsymbol{F}_k) \\
    \end{equation}
    \begin{equation}
    \qquad \sum \limits_{k=1}^{K} {\rm Tr}(\boldsymbol{F}_k) \leq P_{T,\text{digital}}\\
    \end{equation}
    \begin{equation}
    \qquad {\rm rank}(\boldsymbol{F}_k) = 1, \boldsymbol{F}_k \succeq \boldsymbol{0},
    \end{equation}
    \label{P5}
\end{subequations}
\vspace{-3ex}
  
\noindent where the superscript $(n)$ is the iteration index,  $\boldsymbol{F}_k = \boldsymbol{W}_{BB}(:,k)\boldsymbol{W}_{BB}^{H}(:,k)$ and $\boldsymbol{H}_k = \boldsymbol{C}^*{\rm diag}(\boldsymbol{X}^*(:,k))\boldsymbol{i}^*\boldsymbol{i}^T {\rm diag}(\boldsymbol{X}(:,k))\boldsymbol{C}^T$. 

\par Secondly, by fixing the digital beamforming and introducing similar auxiliary variables, we can reformulate $P4$ similarly as above to optimize the electromagnetic analog beamforming as follows:
\begin{subequations}
  \vspace{-0.5ex}
  \begin{equation}
    P6: \max \limits_{\boldsymbol{G}} \sum \limits_{k=1}^K \log(1+t_k) \\
    \end{equation}
    \begin{equation}
    s.t. \qquad u_k \geq \sum \limits_{i \neq k} {\rm Tr}(\boldsymbol{R}\boldsymbol{Q}_k) + \sigma^2 \\
    \end{equation}
    \begin{equation}
    \qquad \frac{t_k^{(n)}}{2u_k^{(n)}}u_k^2 + \frac{u_k^{(n)}}{2t_k^{(n)}}t_k^2 \leq {\rm Tr}(\boldsymbol{R}\boldsymbol{Q}_k) \\
    \end{equation}
    \begin{equation}
    \qquad {\rm Tr}(\boldsymbol{R}) \leq P_{T,\text{analog}} \\
    \end{equation}
    \begin{equation}
    \qquad {\rm rank}(\boldsymbol{R})=1, \boldsymbol{R} \succeq \boldsymbol{0},
    \end{equation}
    \label{P6}
\end{subequations}
\vspace{-3ex}

\noindent where $\boldsymbol{R} = \boldsymbol{i}^*\boldsymbol{i}^T$ and $\boldsymbol{Q}_i = {\rm diag}(\boldsymbol{X}(:,k))\boldsymbol{C}^T\boldsymbol{W}_{BB}(:,i)\boldsymbol{W}_{BB}^H(:,i)\boldsymbol{C}^*{\rm diag}(\boldsymbol{X}^*(:,k))$.

  \par Finally, after dropping the non-convex rank-one constraint, the two optimization problems mentioned above both can be reformulated as SDR problems, and solved alternatively by optimizing the digital precoding matrices and analog excitation vectors using convex solvers such as CVX. Note that the SDR problem is a convex optimization problem, and each iteration's solution is the optimal. Therefore, iteratively solving the optimization problems and updating the variables will either increase or at least maintain the value of the objective function. Given limited transmit power, the proposed iterative algorithm guarantees that the obtained sum rate value forms a monotonically non-decreasing sequence with an upper bound.
  The convergence condition for Algorithm \ref{EHB algorithm} is achieved in terms of the rate value in two successive iterations with variations less than a specified threshold, denoted as $\epsilon$. 

  \par Now we reconsider the rank-one constraint. As mentioned above, to formulate SDR problems, we removed the rank-one constraint in the above two optimization procedures. Fortunately, it can be proven that the solutions for $P5$ and $P6$ also satisfy the rank-one constraint.

\begin{theorem}
  The obtained auxiliary digital precoding matrix $\boldsymbol{F}_k$ and auxiliary analog precoding matrix $\boldsymbol{G}$ in $P5$ and $P6$ satisfy {\rm rank}($\boldsymbol{F}_k$) $=1$, {\rm rank}($\boldsymbol{G}$) $=1$.
\end{theorem}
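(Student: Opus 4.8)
The plan is to prove both semidefinite relaxations tight by writing down the Karush--Kuhn--Tucker (KKT) conditions of the convex, rank-relaxed versions of $P5$ and $P6$ and performing a rank count on the PSD dual slack, exploiting that every effective channel matrix multiplying a matrix variable is a rank-one outer product. I would dispatch $P5$ first, where the argument is clean, and then isolate the extra care $P6$ requires.

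\emph{Step 1 (structure).} I would record that $\boldsymbol{H}_k=\boldsymbol{h}_k\boldsymbol{h}_k^H$ with $\boldsymbol{h}_k=\boldsymbol{C}^*{\rm diag}(\boldsymbol{X}^*(:,k))\boldsymbol{i}^*$, so each $\boldsymbol{H}_k\succeq\boldsymbol{0}$ has rank at most one; likewise in $P6$ each $\boldsymbol{Q}_i=\boldsymbol{q}_i\boldsymbol{q}_i^H$ with $\boldsymbol{q}_i={\rm diag}(\boldsymbol{X}(:,k))\boldsymbol{C}^T\boldsymbol{W}_{BB}(:,i)$. \emph{Step 2 (Lagrangian and stationarity).} For $P5$ I would attach multipliers $\mu_k\ge 0$ to the interference bounds (b), $\nu_k\ge 0$ to the signal bounds (c), $\delta\ge 0$ to the sum-power constraint (d), and a conic multiplier $\boldsymbol{Y}_k\succeq\boldsymbol{0}$ to $\boldsymbol{F}_k\succeq\boldsymbol{0}$; differentiating the Lagrangian in $\boldsymbol{F}_k$ gives
\begin{equation}
  \boldsymbol{Y}_k=\delta\boldsymbol{I}_{N_T}+\sum_{j\neq k}\mu_j\boldsymbol{H}_j-\nu_k\boldsymbol{H}_k ,
\end{equation}
together with complementary slackness $\boldsymbol{Y}_k\boldsymbol{F}_k=\boldsymbol{0}$. \emph{Step 3 (rank count for $P5$).} I would first argue $\delta>0$: an inactive power budget could be scaled up to strictly raise every ${\rm Tr}(\boldsymbol{H}_k\boldsymbol{F}_k)$, hence every $t_k$ and the objective, contradicting optimality. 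Then $\delta\boldsymbol{I}_{N_T}+\sum_{j\neq k}\mu_j\boldsymbol{H}_j\succ\boldsymbol{0}$, and subtracting the single rank-one term $\nu_k\boldsymbol{H}_k$ leaves ${\rm rank}(\boldsymbol{Y}_k)\ge N_T-1$; since $\boldsymbol{Y}_k,\boldsymbol{F}_k\succeq\boldsymbol{0}$ and $\boldsymbol{Y}_k\boldsymbol{F}_k=\boldsymbol{0}$, the range of $\boldsymbol{F}_k$ lies in $\ker\boldsymbol{Y}_k$, so ${\rm rank}(\boldsymbol{F}_k)\le 1$; finally constraint (c) forces ${\rm Tr}(\boldsymbol{H}_k\boldsymbol{F}_k)\ge \tfrac{t_k^{(n)}}{2u_k^{(n)}}u_k^2+\tfrac{u_k^{(n)}}{2t_k^{(n)}}t_k^2>0$, so $\boldsymbol{F}_k\neq\boldsymbol{0}$ and ${\rm rank}(\boldsymbol{F}_k)=1$.

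For $P6$ (variable $\boldsymbol{R}$, written $\boldsymbol{G}$ in the statement) the same three steps produce $\boldsymbol{Y}=\delta'\boldsymbol{I}_{N_T}+\big(\text{PSD interference terms}\big)-\sum_{k}\nu_k\boldsymbol{q}_k\boldsymbol{q}_k^H$ with $\boldsymbol{Y}\boldsymbol{R}=\boldsymbol{0}$, but now a \emph{single} matrix variable sits inside $K$ signal constraints, so the crude count only yields ${\rm rank}(\boldsymbol{R})\le K$. Upgrading this to ${\rm rank}(\boldsymbol{R})=1$ is, I expect, the main obstacle: I would attempt it by exploiting (i) the shared left factor ${\rm diag}(\boldsymbol{X}(:,k))\boldsymbol{C}^T$ common to all the $\boldsymbol{q}_k$, and (ii) the optimality coupling with the digital step of the alternating optimization — at a fixed point of the AO the digital precoder already handles the multi-user interference, which should leave only a rank-one combination of the $\boldsymbol{q}_k\boldsymbol{q}_k^H$ uncancelled in $\boldsymbol{Y}$ — together with the elementary fact that ${\rm rank}(\boldsymbol{A})+{\rm rank}(\boldsymbol{B})\le N_T$ whenever $\boldsymbol{A},\boldsymbol{B}\succeq\boldsymbol{0}$ and $\boldsymbol{A}\boldsymbol{B}=\boldsymbol{0}$. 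Secondary technical points — the matrix differentiation of the Lagrangian, the precise form of conic complementary slackness, and ruling out the degenerate case $\delta=0$ (resp.\ $\delta'=0$) by the scaling/monotonicity argument — are routine. Once rank-one is established for both, I would close by reconstructing $\boldsymbol{W}_{BB}(:,k)$ and $\boldsymbol{i}$ as the scaled principal eigenvectors of $\boldsymbol{F}_k$ and $\boldsymbol{R}$, so that the relaxation is lossless for $P4$.
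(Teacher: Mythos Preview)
Your treatment of $P5$ is essentially identical to the paper's: Lagrangian, KKT stationarity $\boldsymbol{Y}_k=\delta\boldsymbol{I}+\sum_{j\neq k}\mu_j\boldsymbol{H}_j-\nu_k\boldsymbol{H}_k$, complementary slackness $\boldsymbol{Y}_k\boldsymbol{F}_k=\boldsymbol{0}$, the observation that the positive-definite part minus one rank-one term has rank at least $N_T-1$, and the exclusion of $\boldsymbol{F}_k=\boldsymbol{0}$. Your explicit justification that $\delta>0$ via the scaling argument is in fact slightly more careful than what the paper writes.

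Where you diverge from the paper is $P6$, and here you are being \emph{more} scrupulous than the source. The paper's entire proof for the analog side is the sentence ``The same can be proven for $P6$.'' You correctly spot that this is not literally true: a single PSD variable $\boldsymbol{R}$ enters all $K$ signal constraints (43c), so stationarity produces $\boldsymbol{Y}=\delta'\boldsymbol{I}+(\text{PSD terms})-\sum_{k=1}^{K}\nu_k\boldsymbol{q}_k\boldsymbol{q}_k^H$, and the straightforward rank count yields only ${\rm rank}(\boldsymbol{R})\le K$, not $1$. The paper does not close this gap; it simply asserts the conclusion.

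Your proposed repairs, however, do not yet close it either. The ``shared left factor'' idea fails because ${\rm diag}(\boldsymbol{X}(:,k))$ itself depends on $k$, so the vectors $\boldsymbol{q}_k={\rm diag}(\boldsymbol{X}(:,k))\boldsymbol{C}^T\boldsymbol{W}_{BB}(:,k)$ need not lie in a common one-dimensional subspace. The ``AO fixed-point coupling'' idea is suggestive but not a proof: nothing in the KKT system of $P6$ alone forces the $K$ negative rank-one contributions to collapse to a single direction. In short, you have correctly located the weak point of the theorem as stated; to finish honestly one would need either an additional structural assumption (e.g.\ $K=1$, or aligned $\boldsymbol{q}_k$), a different relaxation for $P6$, or a rank-reduction post-processing step (Gaussian randomization or the Shapiro--Barvinok--Pataki bound) rather than a direct KKT argument.
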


\begin{proof}
  Please see Appendix A.
\end{proof}

\begin{algorithm}
  \caption{EHB Scheme for 3D Holographic Array}
  \label{EHB algorithm}
  \begin{algorithmic}[1]
  \Require {Number of antennas $N_T$, user number $K$, frequency $f$, antenna element radiation pattern $g(r,\theta,\phi)$}, geometrical channel coefficients $s_{lk}$, scatter angular information $\Omega_l$ and mutual coupling matrix $\boldsymbol{C}$
  \Ensure Electromagnetic analog precoding vector $\boldsymbol{i}$ and digital precoding matrix $\boldsymbol{W}_{BB}$
      \For{$l = 1,2,...,L$} 
        \State Compute the EM channel vector $\boldsymbol{h}_k$ by (\ref{channel vector for k});
      \EndFor
      \State Construct the EM channel matrix $\boldsymbol{H} = \left[\boldsymbol{h}_1,\hdots,\boldsymbol{h}_K\right]^T$;
      \State Initialize $\boldsymbol{i} = \mu_i \boldsymbol{C}^{-1}\boldsymbol{Z}^{-1}\boldsymbol{e}$ and $\boldsymbol{W}_{BB} = \mu_w\frac{\boldsymbol{H}_{w}^H}{(\boldsymbol{H}_{w}\boldsymbol{H}_{w}^H)^{-1}}$, with $\boldsymbol{H}_{w} = \left[ \boldsymbol{C} {\rm diag}(\boldsymbol{i}) \boldsymbol{GS}\right]^T$;
      \For{$i = 1,2,...,N_{iter}$}
        \State{Determine the auxiliary rank-one matrices $\boldsymbol{F}_k$ for 
        \Statex \quad \, digital precoding matrix optimization based on (\ref{P5})};
        \State {Determine the auxiliary rank-one matrix $\boldsymbol{R}$ for antenna
        \Statex \quad \, excitation vector optimization based on (\ref{P6})};
      \EndFor
      \For{$k = 1,2,...,K$}
        \State Compute $\boldsymbol{f}_k$ through SVD of matrix $\boldsymbol{F}_k$;
      \EndFor
      \State {Construct $\boldsymbol{W}_{BB} = \left[\boldsymbol{f}_1,\hdots,\boldsymbol{f}_K\right]$};
      \State {Obtain $\boldsymbol{i}$} through SVD of $\boldsymbol{R}$;
      \State \Return {$\boldsymbol{i}$ and $\boldsymbol{W}_{BB}$}
  \end{algorithmic}
\end{algorithm}

\par The employed EHB algorithm is summarized in \textbf{Algorithm 1}. We first compute the electromagnetic channel vector $\boldsymbol{h}_k$ for each user and construct the overall channel matrix $\boldsymbol{H}$ based on the estimated GSCM, as shown in lines 1-4. 
We initialize the analog excitation vector and digital beamforming matrix based on the impedance matrix $\boldsymbol{Z}$ and steering vector $\boldsymbol{e}$. The initial point is determined by adopting the excitation currents that maximize the array directivity in the absence of mutual coupling, with $\boldsymbol{Z}$ defined in (\ref{supercurrent}), as shown in line 5.
Then, according to the AO optimization based on SDR method in $P5$ and $P6$, we can obtain the optimized rank-one auxiliary matrices, as shown in lines 7-8. By taking the SVD of auxiliary matrices, we can verify the correctness of SDR solutions by checking the rank and obtain the corresponding beamforming vectors, as shown in lines 11 and 14. Finally, the digital beamforming matrix can be obtained by concatenating the derived precoders for each user, as shown in line 13. 

\vspace{-2ex}
\subsection{Complexity Analysis}
In this subsection, we analyze the computational complexity of our proposed EHB scheme. Recall that the above convex restrictions in $P5$ and $P6$ involve only linear matrix inequality (LMI) and second-order cone (SOC) constraints. Therefore, they can all be solved by a standard interior-point method (IPM) \cite{IPM1,IPM2,IPM3}. This suggests that the worst-case runtime of such a method can be used to compare the computational complexities of the formulations. From \cite{IPMcomplexity}, the basic elements in the complexity analysis of IPMs can be formulated as follows
\vspace{-0.5ex}
\begin{subequations}
  \begin{align}
    &\min \limits_{\boldsymbol{z}\in \mathbb{R}^\mathcal{N}} \boldsymbol{c}^T\boldsymbol{z} \\
    &\sum \limits_{i=1}^{\mathcal{N}}z_i \boldsymbol{A}^j_i - \boldsymbol{B}^j \in \mathbb{S}_{+}^{k_j} \qquad \mathrm{for} \ j = 1,\hdots,q,\label{complexity2}\\ 
    & \boldsymbol{D}^j \boldsymbol{z} - \boldsymbol{b}^j \in \mathbb{L}^{k_j} \qquad \ j=q+1,\hdots, p,
  \end{align}
  \label{complexity}
  \vspace{-2ex}
\end{subequations}

\noindent where $\mathcal{N}$ is the number of unknowns, $\boldsymbol{A}^j_i, \boldsymbol{B}^j \in \mathbb{S}_{+}^{k_j}$ for $i=1,\hdots,\mathcal{N}$ and $j=1,\hdots,q$; $\boldsymbol{D}^j \in \mathbb{R}^{k_j\times \mathcal{N}}$ and $\boldsymbol{b}^j \in \mathbb{R}^{k_j}$ for $j=q+1,\hdots,p$; $\boldsymbol{c} \in \mathbb{R}^\mathcal{N}$; $\mathbb{S}_{+}^{k}$ is the set of $k \times k$ real positive semidefinite matrices and $\mathbb{L}^{k}$ is the second-order cone of dimension $k \geq 1$. Specifically, linear constraint $\boldsymbol{a}^T\boldsymbol{z} - b \geq 0$ is equivalent to size 1 LMI constraint $\boldsymbol{a}^T\boldsymbol{z} - b \in \mathbb{S}^1_+$ and can be put into form (\ref{complexity2}). 
\par The complexity of solving (\ref{complexity}) through a generic IPM can be decomposed into the following two parts \cite{KYWang}:

\par \textit{Iteration Complexity}: Let $\epsilon > 0 $ be the iteration accuracy, the number of iterations required to reach an $\epsilon$-optimal solution of (\ref{complexity}) is on the order of $\sqrt{\beta}\cdot \ln(1/\epsilon)$, where $\beta(\Lambda) = \sum \limits_{j=1}^{q}k_{j} + 2(p-q)$ is  the barrier parameter associated with the cone $\Lambda=\prod_{j=1}^{q}\mathbb{S}_{+}^{k_j}\times \prod_{j=q+1}^p\mathbb{L}^{k_j}$. Roughly speaking, the barrier parameter measures the geometric complexity of the conic constraints in (\ref{complexity}).
\par \textit{Per-iteration Computation Complexity}: In each iteration, the computation cost is determined by (i) the formation of the $\mathcal{N}\times \mathcal{N}$ coefficient matrix $\boldsymbol{H}$ of the linear system, and (ii) the factorization of $\boldsymbol{H}$. The cost of forming $\boldsymbol{H}$ is on the order of 
  \begin{equation}
    \vspace{-0.5ex}
    \begin{aligned}
      &C_{form} = \mathcal{N}\sum_{j=1}^{q}k_j^3 + \mathcal{N}^2\sum_{j=1}^{q}k_j^2 + \mathcal{N}\sum_{j=q+1}^p k_j^2  \\
      &C_{fact} = \mathcal{N}^3,
    \end{aligned}
  \end{equation}
\par The total computation cost per iteration is on the order of $C_{form} + C_{fact}$, and the corresponding complexity of a generic IPM is on the order of $\sqrt{\beta(\Lambda)} \cdot (C_{form}+C_{fact})\dot \ln(1/\epsilon)$.

\par For P5, there are $K+1$ LMI constraints of size 1, $K$ LMI constraints of size $N_T$ and $K$ second-order cone constraints of size 3. For P6, there are $K+1$ LMI constraints of size 1, 1 LMI constraints of size $N_T$ and $K$ second-order cone constraints of size 3. Their corresponding complexity orders are summarized in Table. \ref{complexity order table}. 
\begin{table}[h!]
  \renewcommand\arraystretch{1.7} 
  \begin{center}
    \caption{COMPLEXITY ANALYSIS OF P5 AND P6}
    \label{complexity order table}
    \begin{tabular}{m{1cm}<{\centering}|m{6cm}<{\centering} } 
      \hline 
      \textbf{Method} & {\makecell[c]{\textbf{Complexity Order} \\ \textbf{( $n_1$ = $\mathcal{O}$ $(K N_T^2$)) ( $n_2$ = $\mathcal{O}$ $(N_T^2$))}}}  \\
      \hline 
      \multirow{2}{*}{P5} & $\sqrt{KN_T+3K+1}\cdot n_1 \cdot$ \\ 
      & $\left[ (KN_T^3+10K+1)+n_1(KN_T^2+K+1)+n_1^2 \right]$\\
      \hline
      \multirow{2}{*}{P6} & $\sqrt{N_T+3K+1}\cdot n_2 \cdot$ \\
      & $\left[ (N_T^3+10K+1) + n_2(N_T^2+K+1)+n_2^2 \right]$ \\
      \hline
    \end{tabular}
  \end{center}
  \vspace{-5ex}
\end{table}

\section{Numerical Results}
In this section, simulation results are presented to evaluate the performance of our proposed EHB communication scheme based on 3D holographic antenna arrays.
The communication system is assumed to operate at 1.6 GHz.
A 3D antenna array consisting of printed dipole antennas is equipped at BS to serve $K$ users, with specific antenna hardware parameters as follows: The dipole antenna is printed on a FR-4 substrate measuring $12.2 \enspace \text{mm} \times 78 \enspace \text{mm}$, and the dimension of a dipole element is $1 \enspace \text{mm} \times 71.5 \enspace \text{mm}$.
The digital transmit power constraint is determined based on the SNR settings in simulations while the analog power constraint for antenna excitation is set to 1 W.
We consider the area corresponding to four linear half-wavelength spaced antennas as the aperture area of a single layer. Therefore, the number of antennas of one layer is denoted as $N_t = \lfloor \frac{1.5\lambda}{d}\rfloor+1$, and the total number of antennas is given by $N_T=\nu N_t$, where $d$ and $\nu$ represent the element spacing and number of 3D layers, respectively.
The convergence parameter  for Algorithm \ref{EHB algorithm} is set to $\epsilon=0.1$. Additionally, we assume that there are $L$ scatterers and $K$ users distributed randomly in the azimuth plane, considering a non-line-of-sight environment.

\par For the comparison benchmark, we use the current excitation vector corresponding to maximum directivity in the absence of mutual coupling $\boldsymbol{i} = \mu_i \boldsymbol{Z}^{-1}\boldsymbol{e}$ in \cite{LHanSuperdirectiveJournal} and the corresponding zero-forcing (ZF) digital precoding matrix. 
In the subsequent simulations, we compare the performance of our proposed EHB scheme against the conventional beamforming schemes. 1000 sets of channel coefficients are sampled and the numerical performance results are averaged over these samples.

\begin{figure}[!ht]
  \centering
  \includegraphics[width=3.3in]{./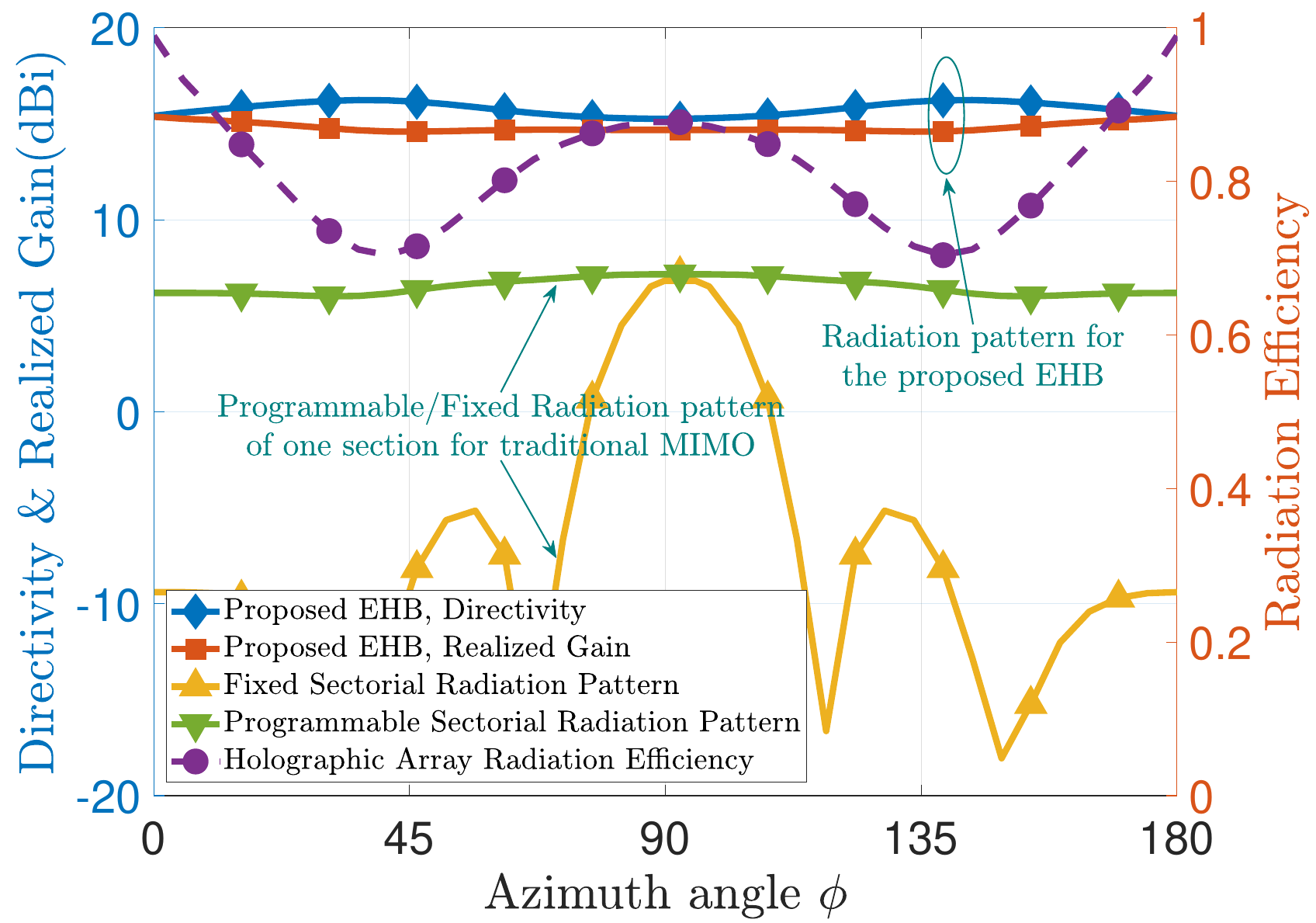}
  \caption{Achievable directivity and realized gain for EHB system.}
  \label{DirectivityvsAngle}
  \vspace{-1ex}
\end{figure}

\par Fig. \ref{DirectivityvsAngle} shows the achieved directivity and realized gain of the 3D holographic antenna array for different transmit azimuth angles. From the figure, we can see that our proposed 3D array achieves a relatively smooth gain over the whole space, which proves the effectiveness of its structure.
Additionally, the directivity curves exhibit peaks at 45 and 135 degrees. This phenomenon can be attributed to the fact that these angles are in the vicinity of the diagonal direction of the 3D antenna array. That is to say, from the perspective of beamforming direction, the diagonal direction corresponds to the longest quasi-linear array within the array structure. The presence of these peaks in the directivity curves validates our intuitive design principle.

\begin{figure}[!ht]
  \centering
  \includegraphics[width=3.3in]{./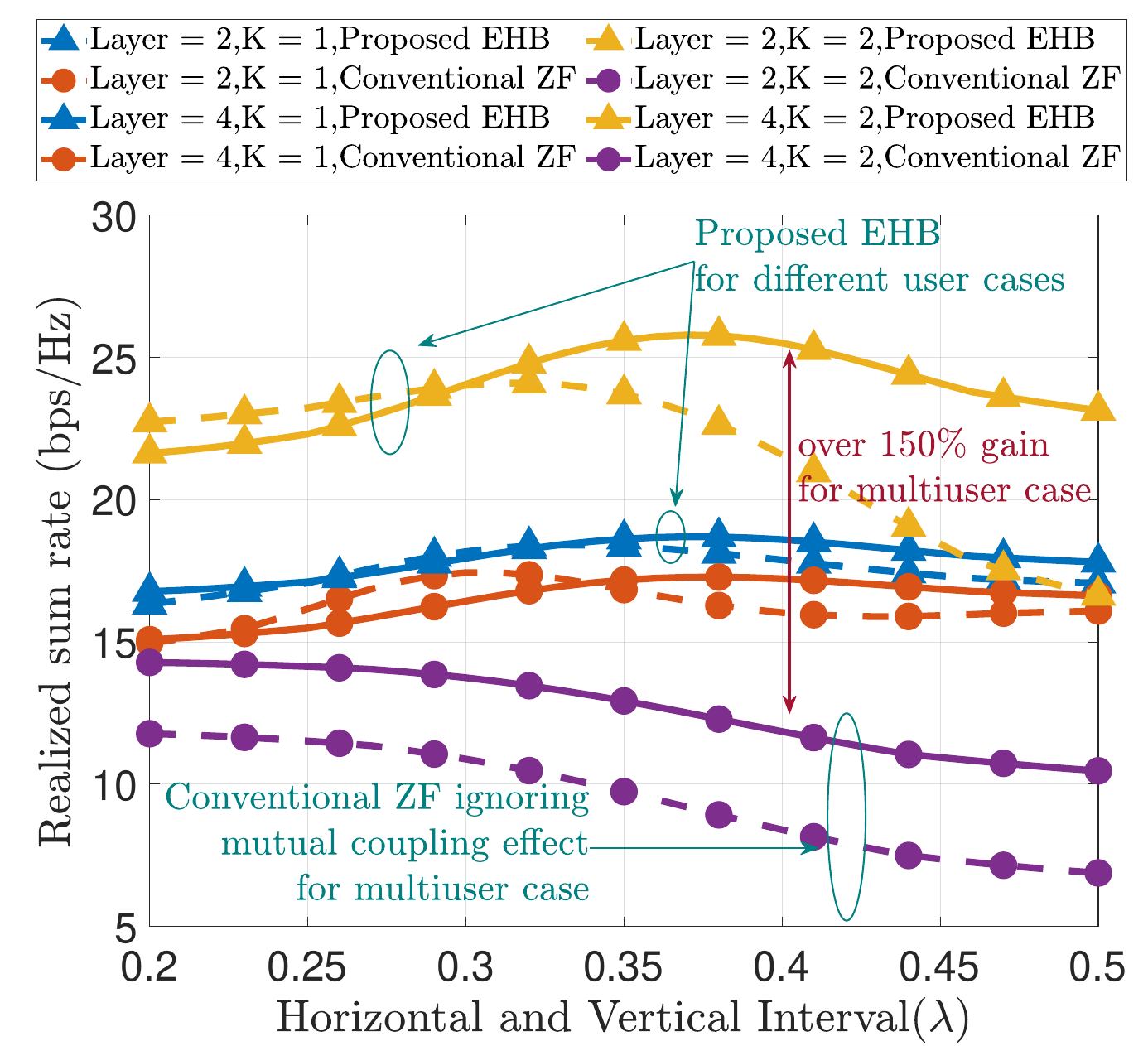}
  \caption{Achievable rate for EHB system with SNR = $20$ dB.}
  \label{SingleMultiUserRate}
\end{figure}

\par Fig. \ref{SingleMultiUserRate} shows the rate performance of the EHB scheme under different element spacing and user settings. 
As can be seen, the ZF precoding without considering the mutual coupling effect achieves the worse rate performance than that of the proposed scheme. Specifically, for the single-user case, the gain of EHB is relatively limited around 10\% on average at a 20 dB SNR level.
However, in multi-user communication scenarios, exploiting programmable radiation patterns can significantly reduce inter-user interference and improve beam focusing. By leveraging the spatial superdirective beams, the proposed scheme achieves a substantial capacity improvement with an average performance gain of over 150\%. Note that the optimal element spacing varies for different 3D antenna structures. This indicates that the geometrical configuration of the 3D holographic antenna array plays a crucial role in achieving the best system performance.
The results clearly demonstrate that our proposed scheme outperforms the conventional precoding algorithms in HMIMO communications.

\par To evaluate the performance of the EHB scheme for different transmission power cases, Fig. \ref{SumRatevsSNR} shows the sum rate performance under different SNR settings when the antenna spacing is fixed at 0.35 wavelength. Simulation results indicate that larger performance gains are observed as the SNR increases. For comparison, we introduce full-digital precoding and traditional hybrid beamforming in \cite{hybridcompare1,hybridcompare2}.
Specifically, the 4-layers optimal sum rate gain is achieved by the multi-user 3D antenna array, demonstrating the effectiveness of our proposed EHB algorithm and architecture. The conventional ZF curve tends to saturate because of the mutual interference induced by the mutual coupling effect.

\begin{figure}[!h]
  \centering
  \includegraphics[width=3.3in]{./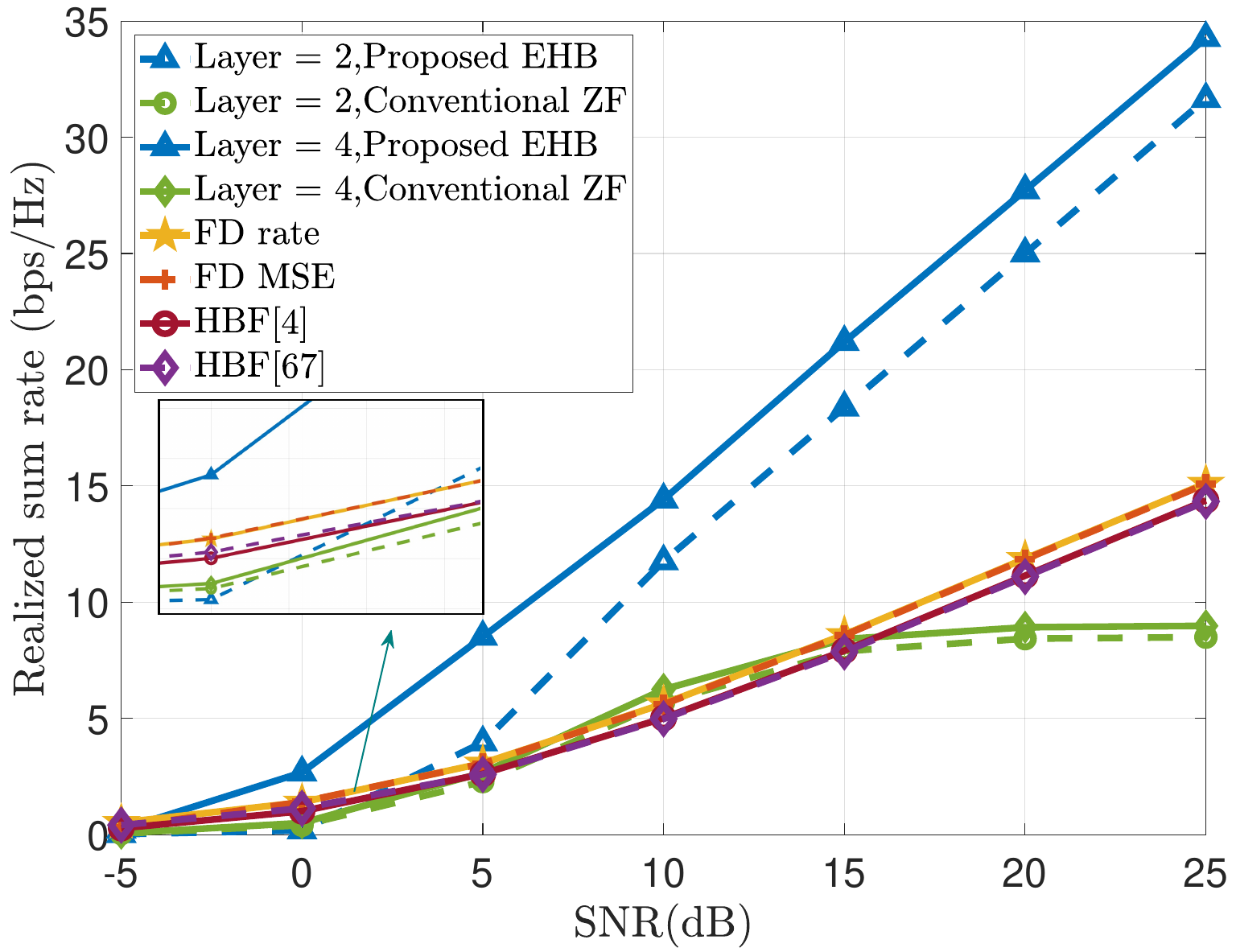}
  \caption{Achievable sum rate for multi-user EHB system with fixed antenna aperture.}
  \label{SumRatevsSNR}
  \vspace{0ex}
\end{figure}

\par Fig. \ref{DirectivityvsDistance} shows the achievable directivity and corresponding radiation efficiency of the 3D holographic antenna array for different antenna spacings.
For comparison, we denote theoretical gain limitation as the realized gain benchmark, expressed as $G = \frac{4\pi A}{\lambda^2}$ \cite{Hannan}, where $A$ refers to the array aperture area and $\lambda$ represents the wavelength. 
It can be observed that our proposed EHB scheme achieves superdirectivity based on the 3D array architecture while adhering to a fixed aperture area constraint.
It is also shown that the radiation efficiency of the 3D array experiences a significant reduction as the antenna spacing decreases, indicating stronger mutual coupling effects.
Besides, for a holographic array transmitter with a fixed aperture, when the antenna spacing is smaller than a threshold, for example, $0.3\lambda$ in this case, the mutual coupling effect is severe and requires highly accurate excitation current control, which is not achievable under practical hardware constraints, leading to a decreased directivity. The aforementioned optimal antenna spacing in an HMIMO array can vary depending on the specific 3D structure of the array.
The array radiation efficiency is obtained through CST simulations. 

\begin{figure}[!ht]
  \centering
  \includegraphics[width=3.3in]{./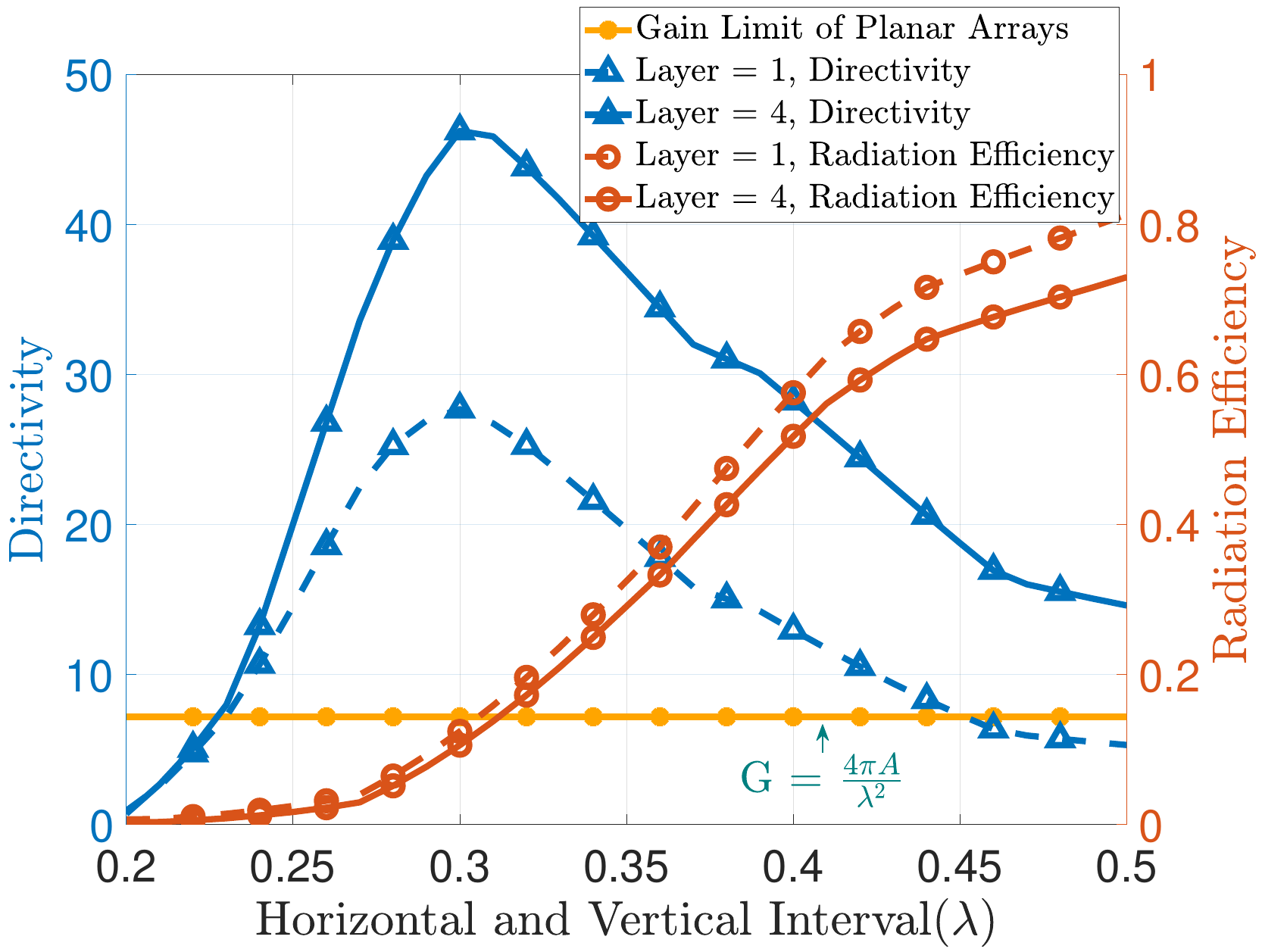}
  \caption{Achievable directivity for fixed aperture EHB system with varying number of antennas.}
  \label{DirectivityvsDistance}
  \vspace{-1ex}
\end{figure}

\par To evaluate the realized gain performance of the 3D array in different antenna spacing cases, in Fig. \ref{RealizedGainvsDistance} we show the achievable realized gain and corresponding radiation efficiency of the proposed holographic array when the SNR is fixed at 20 dB.
As shown in Fig. \ref{RealizedGainvsDistance}, the realized gain of planar array approaches the gain limit at $0.4\lambda$ through EHB, while our proposed EHB scheme with 3D holographic array structure outperforms the planar gain limitation in regions with relatively high radiation efficiency. 
However, when the antenna spacing is reduced below a threshold, for example, $0.4\lambda$ in this case, the corresponding radiation efficiency decreases rapidly and results in a rapid decrease in realized gain as well. Note that the optimal antenna spacing for maximum gain is different from that of the maximum directivity due to the impact of radiation efficiency. 
\vspace{-2ex}
\begin{figure}[!ht]
  \centering
  \includegraphics[width=3.3in]{./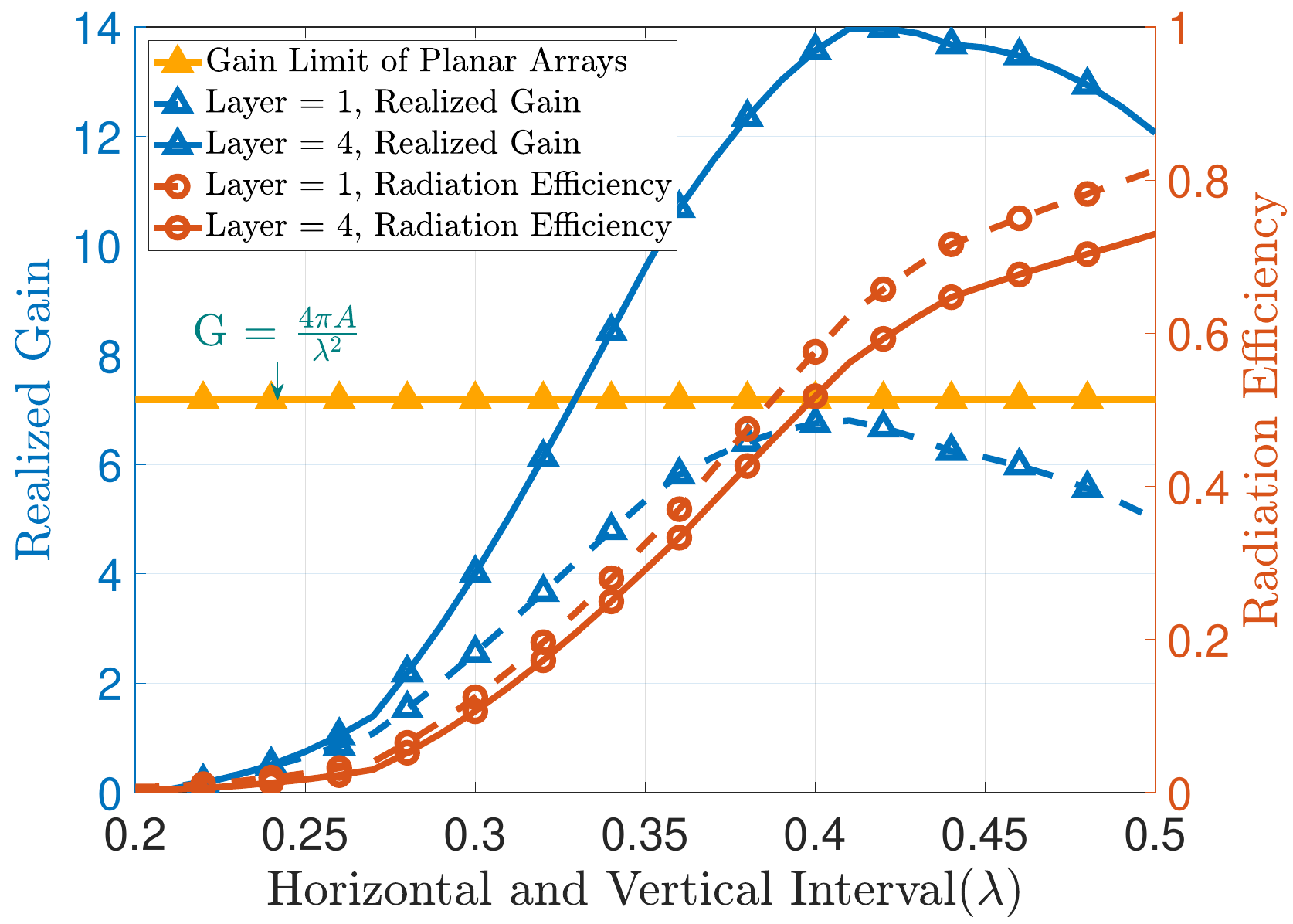}
  \caption{Achievable realized gain for fixed aperture EHB system with varying number of antennas.}
  \label{RealizedGainvsDistance}
\end{figure}

\par Figs. \ref{DirectivityvsDistance} and \ref{RealizedGainvsDistance} demonstrate the superdirective performance of our proposed EHB scheme based on  3D holographic antenna array within a fixed aperture area. In the following figures, we present simulation results for a fixed antenna number, implying that as the antenna spacing varies, the array aperture area also changes accordingly.

\par In Fig. \ref{DirectivityvsDistanceFixedAntennanumber}, we present the directivity and realized gain performance of the proposed EHB scheme for 4 antenna element configurations with different antenna spacings while keeping the SNR fixed at 20 dB. It is evident that for planar densely placed linear array, the directivity increases and then gradually stabilizes at $N_T^2$ as the antenna spacing decreases, which aligns with previous research findings \cite{LHanSuperdirectiveConferene,LHanSuperdirectiveJournal}.
Furthermore, the realized gain of planar array can also approach the gain limit around $0.4\lambda$ spacing. However, note that both the directivity and realized gain achieved by the proposed 3D holographic array surpasses that of the planar linear array, particularly in the region where the antenna element spacing is larger than the aforementioned threshold. This highlights the superior directional and gain performance of the proposed 3D structure compared to its traditional counterpart, emphasizing its potential for practical applications.


\begin{figure}[!ht]
  \vspace{-2ex}
  \centering
  \includegraphics[width=3.3in]{./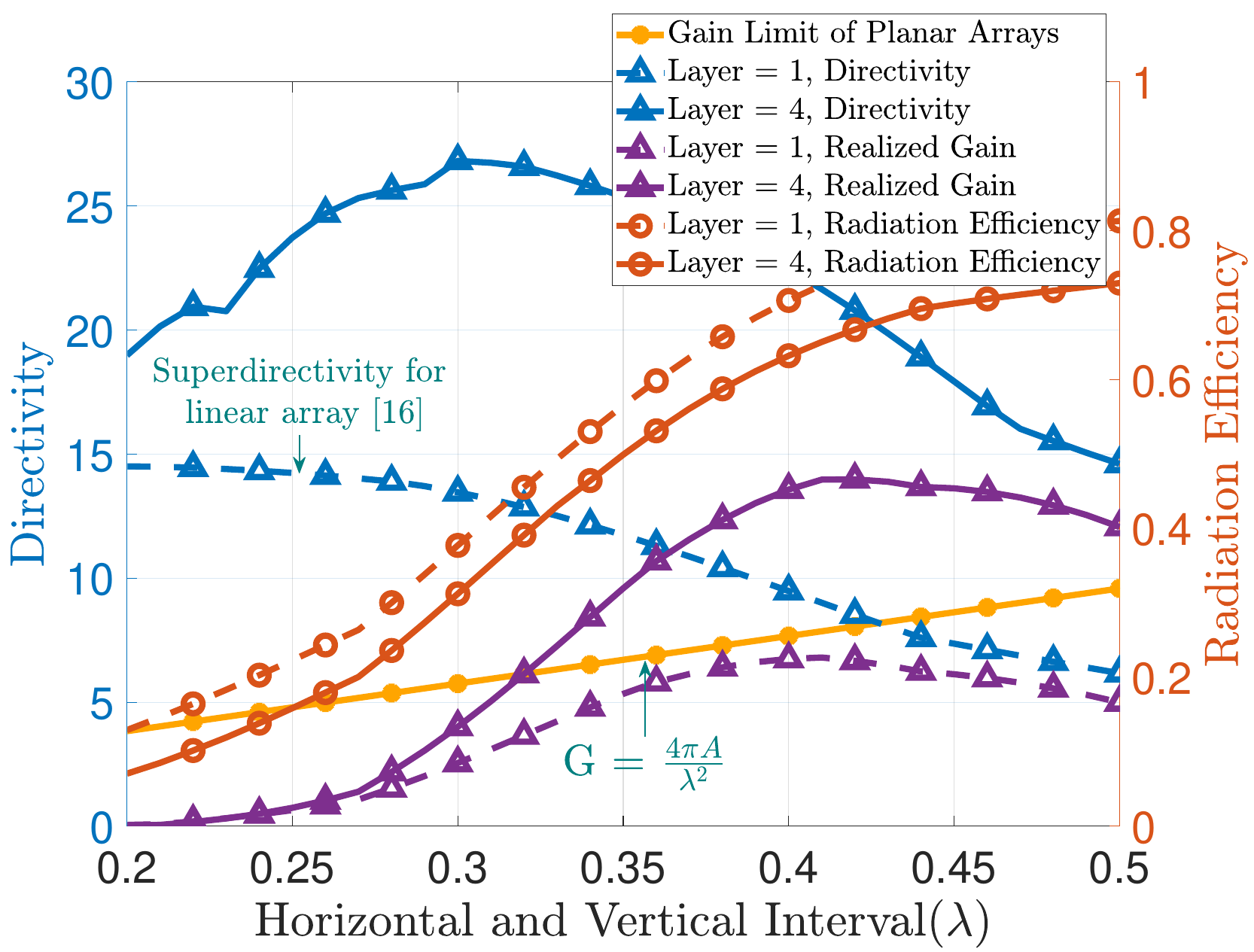}
  \caption{Achievable directivity for fixed antenna number EHB system with varying array aperture.}
  \label{DirectivityvsDistanceFixedAntennanumber}
  \vspace{-0.5ex}
\end{figure}

\par At last, we show the achievable communication rate for a fixed antenna number with different spacings in Fig \ref{SumRatevsDistanceFixedAntennanumber}. 
The results further demonstrate that it is necessary to apply EHB to fully utilize the 3D holographic antenna array, otherwise, there is only marginal benefit in expanding the 3D array.
This emphasizes the crucial role of the proposed beamforming techniques in harnessing the full potential of the 3D holographic array structure and maximizing the sum rate performance. 
Additionally, it can be observed that the communication rate exhibits an upward and then downward trend as the antenna spacing increases from $0.2\lambda$ to $0.5\lambda$, which is in coherence with the realized gain trend in Fig. \ref{DirectivityvsDistanceFixedAntennanumber}. The reason for this phenomenon is as follows: in the sub-half-wavelength region, as the antenna spacing increases, the mutual coupling effect decreases, resulting in reduced superdirectivity but higher radiation efficiency. Their tradeoff results in the fluctuation in this region. 
In the region above the half-wavelength, however, a larger antenna spacing corresponds to a larger aperture area under the constraint of a fixed number of antennas. This leads to a naturally higher array gain and communication rate. In this region, the impact of mutual coupling becomes less significant compared to the increased aperture area, and the fluctuation is due to sidelobes. 

\begin{figure}[!ht]
  \centering
  \includegraphics[width=3.3in]{./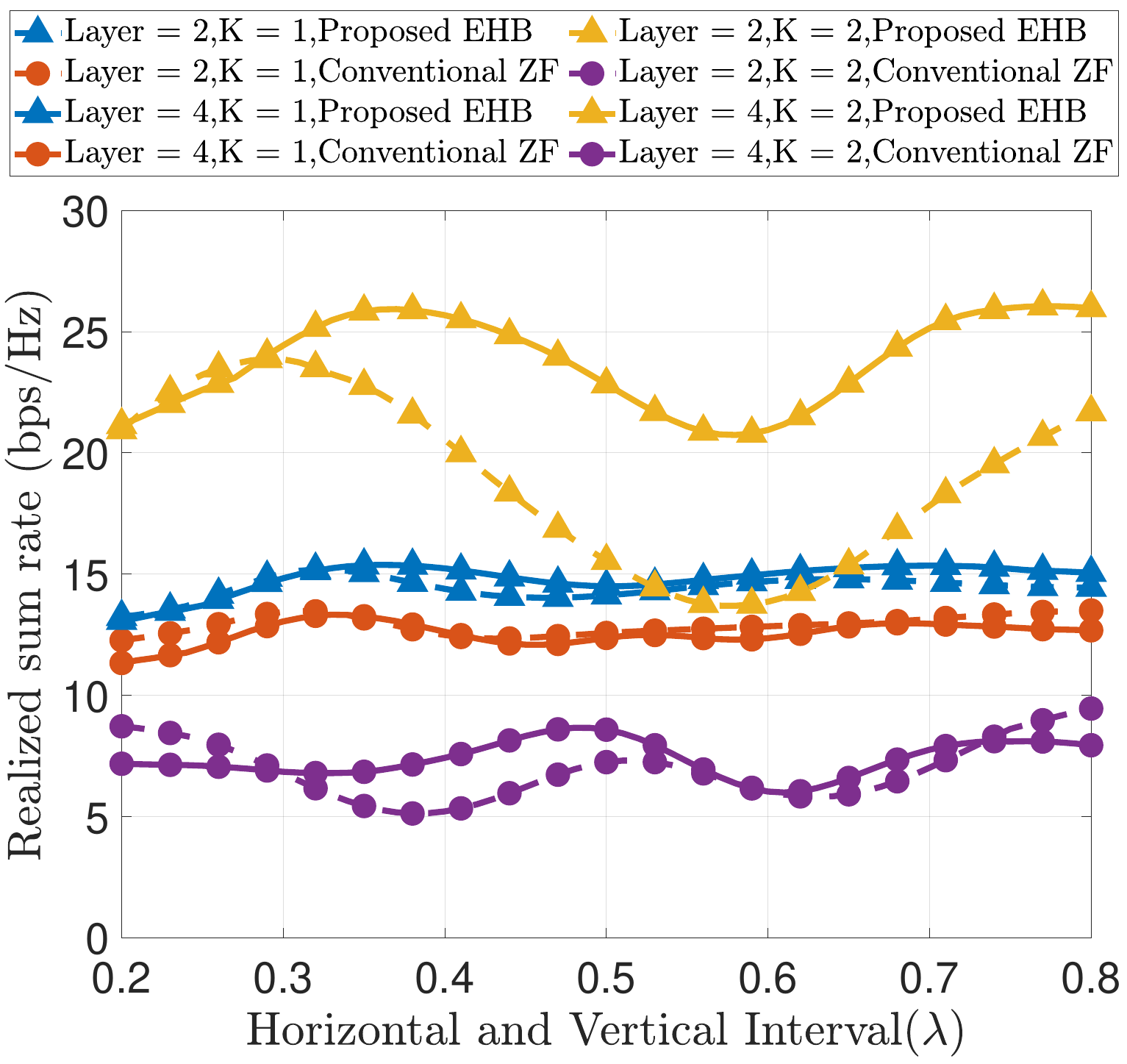}
  \caption{Achievable sum rate for fixed antenna number with varying array aperture.}
  \label{SumRatevsDistanceFixedAntennanumber}
  \vspace{-8ex}
\end{figure}

\section{Conclusion}
In this paper, we proposed an EHB communication scheme based on 3D holographic antenna arrays that takes the impact of array radiation patterns into account,  and incorporates the coupling effect of antenna arrays to have the superdirective gain. 
Specifically, the implementation of analog precoding involves the real-time adjustment of the radiation pattern to adapt to the wireless environment. Meanwhile, the digital precoding is optimized based on the channel characteristics of analog precoding. 
As a result, our approach enables the attainment of programmable spatial patterns and a relatively flat beamforming gain, facilitating the realization of multiple superdirective beams in arbitrary directions. Through simulations, we demonstrate the effectiveness of our proposed method. Additionally, we identify the presence of the optimal antenna spacing in the array geometry, considering the trade-off between the radiation efficiency and superdirective gain.

\vspace{-2ex}
\begin{appendices} 
\section{Proof of theorem 1}
  Firstly we give the Lagrangian function of (\ref{P5}) without the rank-one constraint as follows:
  \vspace{-2ex}
  \begin{align}
      & F(t_k,b_k,\lambda_1,\lambda_{2,k},\lambda_{3,k},\boldsymbol{\lambda}_{4,k},\boldsymbol{F}_k) = \sum \limits_{k=1}^{K} \log(1+t_k) \notag \\
      & + \lambda_1 \left(P_{T,\text{digital}} - \sum \limits_{k=1}^{K}{\rm Tr}(\boldsymbol{F}_k)\right) \notag\\
      & + \sum \limits_{k=1}^{K} \lambda_{2,k} \left[u_k - \left(\sum \limits_{i \neq k}{\rm Tr}(\boldsymbol{H}_k\boldsymbol{F}_i) + \sigma^2\right)\right] \notag \\
      & + \sum \limits_{k=1}^{K} \lambda_{3,k} \left[{\rm Tr}(\boldsymbol{H}_k\boldsymbol{F}_k) - \left(\frac{t_k^{(n)}}{2u_k^{(n)}}u^2_k + \frac{u^{(n)}_k}{2t^{(n)}_k}t^2_k\right)\right] \notag\\
      & + \sum \limits_{k=1}^{K} {\rm Tr}(\boldsymbol{\lambda}_{4,k} \boldsymbol{F}_k),
  \end{align}
  where $\lambda_1$ - $\boldsymbol{\lambda}_4$ represent the Lagrange multipliers corresponding to constraints in (\ref{P5}). Since the relaxed SDP problem $P5$ is convex, Slater's condition is satisfied and the gap between the primal problem and its dual problem is zero \cite{SlaterCondition}. Therefore, the Karush-Kuhn-Tucker (KKT) conditions are necessary and sufficient for the optimal solution of $P5$ without the rank-one constraint. The detailed KKT conditions related to the optimal digital beamforming auxiliary matrix $\boldsymbol{F}_k^*$ can be written as
  \begin{subequations}
    \begin{align}
      \lambda_1^*\boldsymbol{I} + \sum \limits_{i \neq k}\lambda_{2,i}^*\boldsymbol{H}_i - \lambda_{3,k}^*\boldsymbol{H}_k = \boldsymbol{\lambda}_{4,k}^* & \\
      \boldsymbol{\lambda}_{4,k}^*\boldsymbol{F}_k = \boldsymbol{0} & \label{KKT2}\\ 
      \boldsymbol{\lambda}_{4,k}^*\succeq \boldsymbol{0},
    \end{align}
  \end{subequations}
where $\lambda_1^*$ - $\boldsymbol{\lambda}_4^*$ are the optimal Lagrange multipliers. Since $\boldsymbol{I}$ is a full-rank matrix, $\lambda_1^* \geq 0$ and $\lambda_{2,k}^* \geq 0$, we define $\boldsymbol{Y} = \lambda_1^*\boldsymbol{I} + \sum \limits_{i \neq k}\lambda_{2,i}^*\boldsymbol{H}_i$, which is a positive-definite matrix with full rank. Based on this, we have
\begin{equation}
  \begin{aligned}
    {\rm rank}(\boldsymbol{\lambda}_{4,k}) &= {\rm rank}(\boldsymbol{Y} - \lambda_{3,k}^*\boldsymbol{H}_k) \\
    & \geq {\rm rank}(\boldsymbol{Y}) - {\rm rank}(\lambda_{3,k}^*\boldsymbol{\hat{h}}_k\boldsymbol{\hat{h}}_k^H)\\ 
    &\geq {\rm rank}(\boldsymbol{Y}) - 1,
  \end{aligned}
\end{equation}
where $\boldsymbol{\hat{h}}_k = \boldsymbol{C}^*{\rm diag}(\boldsymbol{X}^*(:,k))\boldsymbol{i}^*$. Therefore, we can claim that the rank of $\boldsymbol{\lambda}_{4,k}^*$ is either $N_T$ or $N_T -1$. If ${\rm rank}(\boldsymbol{\lambda}_{4,k}^*)=N_T$, then the optimal $\boldsymbol{F}_k^*$ should be zero matrix according to (\ref{KKT2}), which means that the digital matrix is zero and no signal is transmitted. Thus, the feasible solution should be ${\rm rank}(\boldsymbol{\lambda}_{4,k}^*)=N_T-1$, and the null space of $\boldsymbol{\lambda}_{4,k}^*$ is 1 dimension. From (\ref{KKT2}), the derived optimal beamforming auxiliary matrix $\boldsymbol{F}_k^*$ must lie in the null space of $\boldsymbol{\lambda}_{4,k}^*$, resulting in a rank-one solution, which fulfills the initial constraint we dropped. The same can be proven for $P6$.
\end{appendices}

\bibliographystyle{IEEEtran}
\bibliography{IEEEabrv,references} 

\end{document}